\newtheorem{lemma}{Lemma}
\newtheorem{corollary}{Corollary}
\newtheorem{definition}{Definition}
\newtheorem{remark}{Remark}
\newcommand{\revisNew}{\textcolor{black}}
\newcommand{\arxivNew}{\textcolor{black}}
\begin{document}


\title{Arithmetic Average Density Fusion \revisNew{- Part II}: Unified Derivation for Unlabeled and Labeled RFS Fusion}

\author{Tiancheng~Li,~\IEEEmembership{Senior Member,~IEEE}%
\thanks{
Manuscript submitted to IEEE T-AES on 03-Dec-2022, 23-May-2023, revised on 15-Nov-2023} 
\thanks{This work was partially supported 
 by National Natural Science Foundation of China (Grant No. 62071389), Natural Science Basic Research Program of Shaanxi Province (Program No. 2023JC-XJ-22), 
 and the Fundamental Research Funds for the Central Universities. 
}
\thanks{T.\ Li is with the Key Laboratory of Information Fusion Technology (Ministry of Education), School of Automation, Northwestern Polytechnical University, Xi'an 710129, China, e-mail: t.c.li@nwpu.edu.cn
}
}

\maketitle

\begin{abstract}
\revisNew{As a fundamental information fusion approach, the arithmetic average (AA) fusion has recently been investigated for various random finite set (RFS) filter fusion in the context of multi-sensor multi-target tracking.
It is not a straightforward extension of the ordinary density-AA fusion to the RFS distribution but has to preserve the form of the fusing multi-target density.}
In this work, we first propose a statistical concept, probability hypothesis density (PHD) consistency, and explain how it can be achieved by the PHD-AA fusion and
lead to more accurate and robust detection and localization of the present targets. This forms a both theoretically sound and technically meaningful reason for performing \revisNew{inter-filter PHD AA-fusion/consensus, while preserving the form of the fusing RFS filter.}
Then, we derive and analyze the proper AA fusion formulations for most existing unlabeled/labeled RFS filters basing on the (labeled) PHD-AA/consistency. 
These derivations are \revisNew{theoretically unified,} exact, need no approximation and 
greatly enable heterogenous unlabeled and labeled RFS density fusion 
which is separately demonstrated in \arxivNew{two consequent companion papers}. 
\end{abstract}

\begin{IEEEkeywords}
Random finite set, arithmetic average fusion, multi-sensor fusion, multi-target tracking
\end{IEEEkeywords}




%


\section{Introduction}
\revisNew{Multi-target tracking (MTT) that involves estimating both} the number {and} states of multiple targets in the presence of clutter and misdetection 
has been a long standing research topic with a wide range of applications including surveillance, traffic control and autonomous driving \cite{Vo15mtt}. 
Of particular significance in real world is the scenario involving a time-varying, random perhaps, number of targets that may appear and disappear from the field of view of the sensors at any time in which the number and even identities of targets need to be estimated jointly with the states, \revisNew{namely joint detection and tracking}. One of the most attractive approaches to this problem is the random finite set (RFS) theory \cite{Mahler14book} which has led to various probabilistic filters \revisNew{based on different scenario models and needs}, such as the probability hypothesis density (PHD) filter \cite{Mahler03,Vo06}, 
the cardinazied PHD (CPHD) filter \cite{Vo07cphd}, the multi-Bernoulli (MB) filter \cite{Vo09CBmember}, the Bernoulli filter \cite{Vo11Bernoulli,Ristic13tutorialBernoulli}, the Poisson MB mixture (PMBM) filter \cite{Williams15taesPMBM}, the generalized labeled MB (GLMB) filter \cite{Vo13Label,Vo14GLMB} and their variants.

\revisNew{The growing prevalence of wireless} sensor networks \cite{Olfati-Saber07,Sayed14book,Meyer18ProIEEE} has attracted an increasing interest in robust multi-sensor MTT methods that can take advantage of large quantities of gathered information from multiple sensors for improved estimation accuracy, extended tracking coverage, and enhanced viability/scalability of the local sensors. Existing multi-sensor MTT approaches can be classified into two main groups differing in the type of the information shared between sensors: data-level and estimate-level \cite{Da21Recent}. Briefly speaking, the 
majority of the former is dedicated to compute the exact multi-sensor likelihood/posterior. It usually uses a centralized sensor network where the measurement data of all sensors are sent to the fusion center for joint, optimal fusion. This, however, suffers from poor computational scalability \revisNew{(with respect to the number of sensors)} due to the intractable multi-sensor measurement-to-track association problem, \revisNew{although it achieves a linear computing complexity in the sum of the number of detections from all the sensors \cite{Vo19msGLMB}}. 
By contrast, the latter {communicates and} fuses the \revisNew{posterior estimates (given in means of various densities) produced by local filters}, leading to the suboptimal but robust fusion result. This often realized in the distributed manner offers better computational scalability and system reliability, while the distributed fusion structure is more flexible and less prone to node failures as compared with the centralized fusion. The so-called arithmetic/geometric average (AA/GA) fusion approach \cite{Li22chapter} 
is up to the task.


 \revisNew{Based on the fundamental linear opinion pool \cite{Stone61LOP}, the AA fusion has} an intimate connection with the mixture-type estimator \cite{Li23BM} 
such as the celebrated Gaussian mixture (GM) filter that is arguably the AA fusion of multiple Gaussian estimators. 
It is also in tune with the average consensus approach \cite{Olfati-Saber07,Sayed14book,Meyer18ProIEEE} suited for distributed implementation.
However, its application for multi-sensor RFS density fusion, both centralized and distributed, has only appeared a few years. Prior to this, the dominant solutions are based on the Bayes or minimum-variance framework. As a non-Bayes method, the AA \revisNew{density} fusion was first known 
due to its conservativeness (namely free of information double-counting \revisNew{regardless of the correlation between sensors}) \cite{Bailey12,Li2021SomeResults}, 
high-efficiency for computing, and superiority to deal with misdetection \cite{Yu16,Li17merging,Li17PC,Li17PCsmc,Gostar17,Li19Bernoulli,Li20AAmb, Gao20cphd,Gostar20,Gao20GLMB,Ramachandran21,Da_Li20TSIPN,Li19Diffusion,Yi20AAfov,GLi20AA-asynchronousPHD,Yu22asynchronousCPHD,GLi22aaPMB,Li23AApmbm}. 
Recently, more properties of the AA/GA fusion have been analyzed in \cite{Koliander22} and fast convergence in \cite{Kayaalp22}. 

Nevertheless, the multi-target RFS density 
that involves the number, states and even identities of the targets is not a simple extension of the classic single-target density. 
The RFS filters \revisNew{aim to} solve the target detection, state-estimation and even data-association jointly, \revisNew{for which the core need of effective fusion} is the accuracy and robustness of the detection and localization of each target.
Despite a good many of RFS-AA fusion formulas and flexible implementations \cite{Yu16,Li17merging,Li17PC,Li17PCsmc,Gostar17,Li19Bernoulli,Li20AAmb, Gao20cphd,Gostar20,Gao20GLMB,Ramachandran21,Da_Li20TSIPN,Li19Diffusion,Yi20AAfov,GLi20AA-asynchronousPHD,Yu22asynchronousCPHD,GLi22aaPMB,Li23AApmbm} 
proposed to accommodate different RFS densities, 
a theoretical-unified analysis 
of these formulas 
is still missing. Existing derivations and reasoning for multi-sensor RFS density fusion \cite{Gostar17,DaKai_Li_DCAI19,Gostar20,Gao20cphd,Gao20GLMB} are mainly based on the best fit of the multi-target probability distribution (MPD). They have \revisNew{three} theoretical disadvantages \revisNew{and limitations}
\begin{itemize}
  \item These derivations rely on the MPD-best-fit property (see section \ref{sec:ordinary-AA-fusion}) which does not directly relate to the real target distribution or the optimal multi-sensor posterior 
and therefore does not disclose how the linear fusion can deal with false/missing data and gain better accuracy in target detection and localization. 
Furthermore, it cannot explain why the GA fusion that has the similar MPD-best-fit property performs very differently in comparison with the AA fusion. 
  \item The MPD-best-fit needs approximation in general because the AA of most \revisNew{forms} of RFS MPDs is no more the same \revisNew{form. Each derivation is therefore only applicable to a specific form of MPD. 
  They do} not form an exact, unified framework suited for all RFS filters.
  \item \revisNew{A complete definition of the divergence between labeled densities that needs to take into account the divergence of both the state distribution and the labels is still missing. 
Existing definitions, as well as the derivation based on them, make little account of the difference between matched labels and apply only when the number of labels is the same among the fusing densities.}
\end{itemize}



To overcome the above limitations and disadvantages, this paper contributes in the following aspects:
\begin{enumerate}
\item We propose a statistical concept, \textit{PHD consistency}, which lays a technically unified, intuitively meaningful foundation for (labeled) RFS fusion to explain how the PHD-AA fusion can deal with false/missing data and improve accuracy in target detection and localization. 
\item We derive appropriate AA fusion formulas for different (labeled) RFS filters starting from the (labeled) PHD-AA fusion, which all maintain exactly AA calculation in terms of the relevant (labeled) PHDs and ensure PHD consistency. Although the final fusion results turn out to coincide with those separately given in \cite{Gostar17,DaKai_Li_DCAI19,Gostar20,Gao20cphd,Gao20GLMB}, 
    our derivation is based on an exact, unified framework that is suited for all RFS filters. This unified framework paves a flexible way for \revisNew{either homogeneous or heterogenous RFS filter fusion}; 
    see the \arxivNew{companion papers \cite{Li23Heterogeneous,Li23HeterVA}}. 
    It also exposes the essence of existing consensus-driven multi-sensor RFS filters which merely seek consensus over their (labeled) PHD rather than over the MPDs, \revisNew{while the MPD-best-fit derivation implies the consensus on the level of MPD over-optimistically}. 
\item
    \revisNew{Aiming at labeled PHD consistency, the averaging of labeled PHDs that are functions on single target, single label 
    is appropriate for the general case in which the fusing filters have different numbers of labels, overcoming the deficiency of the labeled MPD-best-fit derivation.} 
\item Derived from the labeled PHD-AA fusion, the exact AA fusion of the \revisNew{general} ($\delta$-)GLMB \cite{Vo13Label,Vo14GLMB} is given for the first time, 
    which can be easily tailored in more specific forms to accommodate the marginalized GLMB (M-GLMB) \cite{Fantacci15M-glmb} and the labeled MB (LMB) \cite{Reuter14LMB}.
\end{enumerate}

\revisNew{This paper is the second part of a series of papers that aim to provide a comprehensive and  thorough study of the AA fusion methodology and its application for target tracking. The first is on the fundamental Kalman/particle filter fusion \cite{Li2021SomeResults} \arxivNew{while the third and the forth} are on advanced heterogeneous unlabeled and labeled RFS filter fusion \cite{Li23Heterogeneous,Li23HeterVA} based on the finding of this work. The rest of this paper} is organized as follows. The representative RFS processes and the AA fusion approach are briefly introduced in section \ref{sec:background}. The \revisNew{concept of} PHD consistency is formally defined in section \ref{PHD-Consistency} where the suitable AA fusion formulas for PHD/CPHD filters are consequently derived. The unlabeled Bernoulli, MB and MB mixture (MBM), and the labeled GLMB, M-GLMB and LMB fusion are addressed in sections \ref{sec:BMs} and \ref{sec:label}, respectively. 
The paper is concluded in section \ref{sec:conclusion}.

%

\section{Background and Preliminary} \label{sec:background}
\revisNew{The essential goal of multi-sensor cooperative target tracking is to get more accurate and robust detection and localization of the present targets based on the information exchange and fusion among the sensors. In this work, we focus on the estimate-level fusion where what is fused is the first order moments of the MPDs, namely the PHD, individually calculated by the local RFS filters.}
{For brevity,}
the details of local single-sensor filters and {of} the inter-sensor communication are fully omitted. We further omit the time notation and the dependence of the filter estimate on the observation process. Sensors are assumed to be synchronous and perfectly coordinated. 

\subsection{RFS Theory}
Denote by $\mathcal{X} \subseteq \mathbb{R}^d$ the $d$-dimensional state space and by $\mathbb{X}$ all of the finite subsets of $\mathcal{X}$.
The states of a random number of targets are described by an RFS ${X} = \big\lbrace \mathbf{x}_{1}, \dots, \mathbf{x}_{n} \big\rbrace \subseteq \mathbb{X}$, where $n =|{X}|$ denotes the random number of targets, namely the cardinality of the set,
and $\mathbf{x}_{i}\in \mathcal{X}$ is the state vector of the $i$-th target.
The random nature of the multi-target RFS ${X}$ is captured by its MPD, denoted by $f({X})$. For any realization of $X$ with a given cardinality $|{{X}}| =n$, \revisNew{denoted by} ${X}_n = \big\lbrace \mathbf{x}_{1}, \dots, \mathbf{x}_{n} \big\rbrace$, \cite[Eq.2.36]{Mahler14book}
\begin{equation}\label{eq:fisst_mttPDF}
  f({X}_n) = n! \rho(n) f(\mathbf{x}_{1}, \dots, \mathbf{x}_{n} )
\end{equation}
where the localization densities $f(\mathbf{x}_{1}, \dots, \mathbf{x}_{n} )$ for $n=1,2,...,$ are symmetric in their arguments and the cardinality distribution $\rho(n)\triangleq \mathrm{Pr}\{|{{X}}|=n\} = \int_{|{X}| = n} {f({X})\delta {X}}$ is given by 
\begin{align}
  \rho(n) 
  & = {\frac{1}{{n!}}\int_{\mathcal{X}^n} {f\big(\{ {\mathbf{x}_1},\dots,{\mathbf{x}_n}\} \big)d{\mathbf{x}_1}\dots d{\mathbf{x}_n}}}
\end{align}

The set integral in $\mathbb{X}$ is defined as \cite[Ch. 3.3]{Mahler14book}
\begin{align}
\int_{\mathbb{X}} {f({X})\delta {X}}
   &= 
    \sum\limits_{n = 0}^\infty  {\frac{1}{{n!}}\int_{\mathcal{X}^n} {f\big(\{ {\mathbf{x}_1},\dots,{\mathbf{x}_n}\} \big)d{\mathbf{x}_1}\dots d{\mathbf{x}_n}}} \label{eq:set-integral-expansion} \\
   &= \sum\limits_{n = 0}^\infty \rho(n) \\
   &\revisNew{ =1 \nonumber}
\end{align}
where $f(\emptyset )= \rho(0)$.

The PHD $D(\mathbf{x})$, also known as the first moment density \cite[pp. 168-169]{Goodman97},
of the MPD $f({X})$ is a density function on single target $\mathbf{x}\in X$, defined as \cite[Ch.4.2.8]{Mahler14book} 
\begin{align}
D(\mathbf{x})   
& \triangleq \int_\mathbb{X}   {\bigg(\sum_{\mathbf{y}\in X}{\delta_\mathbf{y}}(\mathbf{x}) \bigg)f(X)\delta X} \label{def-PHD}
\end{align}
where \revisNew{${\delta_\mathbf{y}}(\mathbf{x})$ is the Dirac delta function concentrated at $\mathbf{y}$.} 

The PHD has clear physical significance as its integral in any region $\mathcal{S} \subseteq \mathcal{X} $ gives the expected number $\hat{N}^{\mathcal{S}}$ of targets in that region, i.e.,
\begin{equation}\label{eq:def-phd-integral=hatN}
  \hat{N}^{\mathcal{S}} = \int_{\mathcal{S}} D(\mathbf{x}) d \mathbf{x}
\end{equation}
Therefore, it tells how well the present targets are detected.

\subsection{Some Important \revisNew{RFSs}} \label{sec:Classic-RFS-distributions}

\subsubsection{Independent and Identically Distributed Cluster (IIDC)}

An RFS $ X $ is considered as an IIDC RFS if its cardinality is arbitrarily distributed according to $ \rho(n) $, and for any finite cardinality $n$, the elements are independently and identically distributed according to the single-target probability density (SPD) $s(\mathbf{x})$. 
 The MPD and PHD associated with the IIDC RFS $X$ are, respectively, given by
\begin{align}
	f^{\text {iidc}}(X_n) &= n!\rho(n)\prod\limits_{\mathbf{x} \in X_n} {s(\mathbf{x})} \label{eq:IIDC-p} \\  %
	D^{\text {iidc}}(\mathbf{x}) &= \sum\limits_{n = 0}^\infty  {n\rho(n)} {s}(\mathbf{x}) \label{eq:iidc-phd}  
\end{align}

The IIDC RFS reduces to a Poisson RFS when the cardinality distribution $\rho(n)$ is Poisson. The MPD and PHD of the Poisson RFS $X_n$ with mean $\lambda$ are, respectively, given by
\begin{align}
	f^{\text {p}}(X) =\,\,& {{\text{e}}^{ - {\lambda}}}\prod\limits_{\mathbf{x} \in X} {{\lambda}{s}(\mathbf{x})}  \label{eq:Poisson-PD} \\
	D^{\text {p}}(\mathbf{x}) =\,\,& \lambda {s}(\mathbf{x}) \label{eq:Poisson-PHD}
\end{align}

\subsubsection{Bernoulli and MB}
The density and PHD of a Bernoulli RFS $X$ 
with target existence probability $r$ and SPD $s(\mathbf{x})$ are, respectively, given by
\begin{align}
f^{\text{b}}\left(X\right) & = \begin{cases}
1-r & X  =\emptyset\\
rs\left(\mathbf{x}\right) & X =\left\{ \mathbf{x}\right\} \\
0 & \mathrm{otherwise}
\end{cases} \label{eq:Bernoulli-PD} \\
D^{\text{b}}\left(\mathbf{x}\right) & = rs\left(\mathbf{x}\right) \label{eq:Bernoulli-PHD}
\end{align}

An MB RFS $ X_n$ is the union of $n$ independent Bernoulli RFSs  \cite{Vo09CBmember}. 
Denoting the $l$-th Bernoulli component (BC) by $\big(r_{l},s_{l}(\mathbf{x})\big)$, the MPD and PHD of MB RFS $ X_n$ are, respectively, given by
\begin{align}
	f^\text{mb} (X_n) & = \sum_{\uplus_{l=1}^{n}X^{(l)}=X_n}\prod_{l=1}^{n}f^\text{b} \left(X^{(l)}\right) \label{eq:def_MB}\\
	D^{\text {mb}}(\mathbf{x}) & = \sum_{l=1}^n{r_l}{s_l}(\mathbf{x})  \label{eq:PHD-MB}
\end{align}
where $\uplus$ denotes the disjoint union. 

When multiple MBs are used jointly and associated with different measurement-to-track hypotheses or labels, they result in the PMBM \cite{Williams15taesPMBM}, the MBM \cite{Angel18PMBMdeivation} and the GLMB \cite{Vo13Label,Vo14GLMB} which are all conjugate prior for multi-target filtering. 

\subsubsection{PMBM}
An PMBM RFS\cite{Williams15taesPMBM} is the convolution of a Poisson RFS and an MBM RFS with MPD
\begin{equation}
	f^{\text {pmbm}}(X) =\sum\limits_{Y \uplus W =X} {{f^{\text {p}}}(Y){f^{\text {mbm}}}(W)}
\end{equation}
where the Poisson density $f^\text{p}(\cdot)$ represents all undetected targets, and the MBM density $f^{\text{mbm}}(\cdot)$ representing potentially detected targets is a weighted sum of MBs and has the form
\begin{equation}
	{f^{\text {mbm}}}(X_n) \propto \sum\limits_{j \in \mathbb{J}} {\sum\limits_{\uplus_{l=1}^{n}X^{(l)} =X_n} {\prod\limits_{l= 1}^n {{w_{j,l}}f_{j,l}^{\text {b}}\left({X^{(l)}}\right)} } } \label{eq:def_MBM_MPD}
\end{equation}
where $j$ is an index to the MBs (each corresponding to a global hypothesis), $\mathbb{J}$ is the index set and $w_{j,l}$ is a coefficient/weight assigned to the $l$-th BC in the $j$-th hypothesis/MB.

The PHD of the MBM RFS $X_n$ of the MPD \eqref{eq:def_MBM_MPD} is
\begin{equation}
	D^{\text {mbm}}(\mathbf{x}) = \sum\limits_{j \in \mathbb{J}} \sum_{l=1}^n w_{j,l}r_{j,l}s_{j,l}(\mathbf{x}) \label{eq:def_MBM_PHD}
\end{equation}
where $\big(r_{j,l},s_{j,l}(\mathbf{x})\big)$ denote the parameters of the $l$-th BC in the $j$-th hypothesis/MB.


As a special PMBM, the Poisson MB (PMB) \cite{Williams15TSPpmb} uses the MB 
instead of MBM for representing detected targets. 

\subsubsection{GLMB}
Denote by $\mathcal{L}$ the label space and by $\mathbb{L}$ all of the finite subsets of $\mathcal{L}$.
A labeled RFS (LRFS) is an RFS whose elements are assigned distinct, discrete labels \cite{Vo13Label,Vo14GLMB}. 
A realization of an LRFS with cardinality $n$, multi-target state $X_n$ and label set $L_n = \big\lbrace l_{1}, l_{2}, \dots, l_{n} \big\rbrace \subseteq \mathbb{L}$ is denoted by $\widetilde{X}_n = \{ (\mathbf{x}_1, l_1),(\mathbf{x}_2, l_2),...,(\mathbf{x}_n, l_n) \} \subseteq \mathbb{X} \times \mathbb{L}$. The LRFS is completely characterized by its MPD $\pi\big(\widetilde{X}\big)$. 
The most known LRFS density, namely the GLMB, is a mixture of multi-target exponentials with MPD \cite{Vo13Label,Vo14GLMB}
\begin{equation}\label{eq:glmb-in-general}
	\pi^{\text {gl}}\big(\widetilde{X}\big) =  \Delta \big(\widetilde{X}\big) \sum\limits_{c \in \mathbb{C}} {{\omega ^{(c)}}\Big(\mathcal{L}\big(\widetilde{X}\big)\Big)\prod\limits_{(\mathbf{x},l) \in \widetilde{X}} {{s^{(c)}}(\mathbf{x},l)} }
\end{equation}
where the distinct label indicator $\Delta \big(\widetilde{X}\big)= \delta_{|\widetilde{X}|}\big[\mathcal{L}\big(\widetilde{X}\big)\big]$, $ \mathbb{C} $ is a discrete index set, the projection function $\mathcal{L}(\mathbb{X} \times \mathbb{L}) \rightarrow \mathbb{L}$ is given by $\mathcal{L}\big((\mathbf{x},l)\big) = l$, $\omega ^{(c)}(L_n) $ is the weight of the hypothesis that $|\widetilde{X}|=n$ targets are presented with respective labels $ l_1, \ldots, l_n \in L_n $ and respective labeled SPDs $ s^{(c)}(\mathbf{x}_1,l_1),\ldots, s^{(c)}(\mathbf{x}_n,l_n)$, satisfying
$$\sum\limits_{L \subseteq \mathbb{L}} {\sum\limits_{c \in \mathbb{C}} {{\omega^{(c)}}(L)} }  = 1, \int_{\mathcal{X}} {{s^{(c)}}(\mathbf{x},{l})d\mathbf{x}}  = 1.$$
%
%

The labeled PHD (LPHD) of an GLMB is given by \cite{Vo13Label}
\begin{align}
  {\widetilde D}^{\text {gl}}(\mathbf{x},l) & \triangleq \int_{\mathbb{X} \times \mathbb{L}}  {\pi}\big((\mathbf{x},l)\cup {\widetilde X}\big)\delta {\widetilde X} \\ 
  &= \sum\limits_{c \in \mathbb{C}}{{s^{(c)}}(\mathbf{x},l)}\sum\limits_{L \subseteq \mathbb{L}} {{\mathrm{1}_L}(l) {\omega ^{(c)}}(L)}  \label{eq:glamb-Lphd}
\end{align}
where ${\mathrm{1}_x}(y) = 1$ if $y \subseteq x$ and ${\mathrm{1}_x}(y) = 0$ otherwise.

%

A $\delta$-GLMB RFS is a special GLMB RFS with $\mathbb{C} =  \mathbb{L} \times \Xi$, ${\omega^{(c)}}(L') = {\omega^{(L,\xi )}}{\delta _L}(L')$, and ${s^{(c)}}(\mathbf{x},l) = {s^{(L,\xi )}}(\mathbf{x},l) = {s^{(\xi )}}(\mathbf{x},l)$, where the pair $(L,\xi)$ represents the hypothesis that the track set $L$ has a history $\xi$ of measurement-track association maps \cite{Vo14GLMB}. The MPD of a $\delta$-GLMB RFS {$\widetilde{X}$} is
\begin{equation} 
	\pi^{{\delta}}\big(\widetilde{X}\big) 
= \Delta \big(\widetilde{X}\big) \sum\limits_{L \subseteq \mathbb{L}} {\delta _L}\big[\mathcal{L}\big(\widetilde{X}\big)\big]  \sum\limits_{\xi \in \Xi} {{\omega^{(L,\xi )}}\prod\limits_{(\mathbf{x},l) \in \widetilde{X}} {{s^{(\xi)}}(\mathbf{x},l)}} \nonumber
\end{equation}

The 
LPHD 
of the $\delta$-GLMB is given as follows \cite{Vo13Label}
\begin{align}
  {\widetilde D}(\mathbf{x},l) & \triangleq \int  {\pi}\big((\mathbf{x},l)\cup {\widetilde X}\big)\delta {\widetilde X} \label{eq:def-LRFS-phd-label} \\
  &=  \sum\limits_{L \subseteq \mathbb{L}} {{\mathrm{1}_L}(l) \sum\limits_{\xi \in \Xi} {\omega^{(L,\xi )}} {{s^{(\xi)}}(\mathbf{x},l)} } \label{eq:glamb-phd} 
\end{align}

\subsection{Ordinary AA Density Fusion} \label{sec:ordinary-AA-fusion}
We briefly review the AA fusion in terms of the ordinary probability density function (PDF). 
For a set of PDFs $f_i(\mathbf{x})$ produced by estimator $i \in \mathcal{I}$ where $\mathcal{I} =\{1,2,...,I\}$ denotes the set of sensors, the AA fusion is simply given as follows
\begin{equation}\label{eq:AA-density}
{f_{\text{AA}}}(\mathbf{x}) \triangleq \sum\limits_{i \in {\mathcal{I}}} {{w_i}{f_i}(\mathbf{x})}
\end{equation}
where the fusion weights $\mathbf{w} =\{w_1,\dots,w_I\} \in \mathbb{W}$, and weight space $\mathbb{W} \triangleq\{\mathbf{w} \in \mathbb{R}^{I}|\mathbf{w}^\mathrm{T}\mathbf{1}_I = 1, w_i > 0, \forall i \in \mathcal{I}\} \subset \mathbb{R}^{I}$.

The AA fusion greatly 
ensures the desired fusion closure for these mixture densities.
Another important property of the AA PDF fusion is that
{the expectation $\mathbb{E}_{p_\text{AA}(\mathbf x | \mathbf w)}{H(\mathbf x)}$ of an arbitrary function $H(\mathbf x)$ of $\mathbf x$ with respect to the AA fused density $f_\text{AA}(\mathbf x | \mathbf w) $ is given as follows
\begin{equation}\label{eq:exp_AA_H_func}
        \mathbb{E}_{f_\text{AA}(\mathbf x | \mathbf w)}{H(\mathbf x)}
        = \sum_{i\in \mathcal{I}} w_{k} \mathbb{E}_{f_i (\mathbf x )}{H(\mathbf x)}
\end{equation}}

{Arbitrary moments of $\mathbf x$ can be obtained for appropriate choices of $H(\mathbf x)$. When $ H(\mathbf x)= \mathbf{x}$, \eqref{eq:exp_AA_H_func} reads the unbiasedness of the AA fusion, i.e., the AA of a number of unbiased estimators is guaranteed to be unbiased.}
It was further pointed out that the AA is a Fr\'{e}chet mean in the sense of $L_2$ norm error \cite{Li20AAmb}, i.e.,
\begin{equation}
  f_\mathrm{AA}(\mathbf{x})  = \operatorname*{arg\,min}_{g\in\mathcal{F}_{\mathcal{X}}} \sum\limits_{i \in {\mathcal{I}}}{w_i\int_{\mathcal{X}} \big(f_i(\mathbf{x})-g(\mathbf{x})\big)^2 d\mathbf{x}}   \label{eq:FrechetAA}
\end{equation}
where $\mathcal{F}_{\mathcal{X}}$ denotes the set of scalar-valued functions in space $\mathcal{X}$: $\mathcal{F}_{\mathcal{X}} = \{f: \mathcal{X} \rightarrow \mathbb{R} \}$.

The AA fusion minimizes the weighted sum of the Kullback-Leibler (KL) divergences of the fusing densities relative to the fused result \cite{Kulhavy96,Li2021SomeResults}, 
i.e., 
\begin{equation}\label{eq:AA-density-KLD}
 {f_{\text{AA}}}(\mathbf{x}) = \operatorname*{arg\,min}_{g\in\mathcal{F}_{\chi}} \sum\limits_{i \in {\mathcal{I}}} {w_iD_\text{KL}\big(f_i\|g\big)}
\end{equation}
where the KL divergence ${D_{{\text{KL}}}}(f\|g) = \int_{\mathcal{X}} {f(\mathbf{x})\log \frac{f(\mathbf{x})}{g(\mathbf{x})} d \mathbf{x}}$ in the case of continuous functions and $D_{\text{KL}}(f\|g) = \sum_{n=0}^\infty f(n)\log \frac{f(n)}{g(n)}$ in the case of discrete functions. 

The above optimization, also known as 
minimum information loss (MIL) fusion \cite{Gao20cphd,Gao20GLMB}, provides a best fit of the mixture (BFoM) of all fusing distributions with weights, $\{w_if_i\}_{i\in {\mathcal{I}}}$. Considering that the mixture $\{f_i\}_{i\in {\mathcal{I}}}$ contains the information obtained by all sensors, it is a reasonable substitute of ``\textit{all information}''. This approach 
best fits the mixture but {not the true distribution 
or the multi-sensor optimal distribution} and so the BFoM/MIL itself does not guarantee any gain in the estimation accuracy or robustness. This is easily evidential. By turning the direction of the KL divergence in \eqref{eq:AA-density-KLD}, the minimization will lead to the GA fusion \cite{Kulhavy96,DaKai_Li_DCAI19} which is significantly different with the AA fusion, 
i.e.,
\begin{align}
 {f_{\text{GA}}}(\mathbf{x}) & = \operatorname*{arg\,min}_{g: \int g(\mathbf{x})d\mathbf{x} =1} \sum\limits_{i \in \mathcal{I}} {w_iD_\text{KL}\big(f_i\|g\big)} \label{eq:GA-density-KLD} \\
 & =\frac{1}{c} \prod_{i \in \mathcal{I}} f_i^{w_i}(\mathbf{x})
\end{align}
where $c= \int_{\mathcal{X}} \prod_{i \in \mathcal{I}} f_i(\mathbf{x})^{w_i} d\mathbf{x}$ is a normalization constant.
%


However, given the true distribution $p(\mathbf{x})$, the optimal fusing weights should be given by
\begin{align}
  \mathbf{w}_\text{opt} = & \mathop{\arg\min}\limits_{\mathbf{w} \in \mathbb{W}}  D_\text{KL}\left({f_\text{AA}}\| p\right) \nonumber \\
  = &  \mathop{\arg\min}\limits_{\mathbf{w} \in \mathbb{W}} \sum_{i \in \mathcal{I}} w_i \big( D_\text{KL}({f_i}\| p)  - D_\text{KL}( {f_i}\| f_\text{AA}) \big) \label{eq:min_W-BFoM}
\end{align}
This formulation is solidly motivated since the true distribution $p(\mathbf{x})$ is referred to, which is a fundamentally different to \eqref{eq:AA-density-KLD} and \eqref{eq:GA-density-KLD}.
But as the true distribution $p(\mathbf{x})$ is actually unknown, a practically operable alternative is given by ignoring the former part in \eqref{eq:min_W-BFoM} and solving the following suboptimal, \textit{diversity preference}, maximization problem \cite{Li2021SomeResults}
\begin{equation}
  \mathbf{w}_\text{BFoM} =  \mathop{\arg\max}\limits_{\mathbf{w} \in \mathbb{W}}  \sum_{i \in \mathcal{I}} w_i D_\text{KL}( {f_i}\| f_\text{AA}) \label{eq:suboptimal-AAweight}
\end{equation}

Combining \eqref{eq:AA-density-KLD} with \eqref{eq:suboptimal-AAweight} yields a maxmin formulation for joint optimization of the fusing form and fusing weights
\begin{equation}
(\mathbf{w}_\text{BFoM},f_\text{AA}) = 
  \arg\mathop{\max}\limits_{\mathbf{w} \in \mathbb{W}} \mathop{\min}\limits_{g \in \mathcal{F}_{\mathcal{X}}} \sum_{i \in \mathcal{I}} w_i D_\text{KL}( {f_i}\| g) \label{eq:JiontOpt-AA}
\end{equation}
which is hereafter referred to as the BFoM optimization.

\revisNew{We note that the above BFoM formulation is appropriate for any ordinary density including the PHD functions but not the labeled densities in general; see Section \ref{sec:LRFS-label}.}

\section{PHD-Consistency and PHD-AA Fusion} \label{PHD-Consistency}
\revisNew{An intuitive idea to resolve RFS density fusion is to extend the PDF-AA fusion to the MPD domain for which, however, the major barrier in theory is from the fact that most RFS MPDs do not admit closed-form AA fusion, e.g., the AA of IIDC/sMBs is no more an IIDC/MB \cite{Li20AAmb,Da_Li20TSIPN}. This mainstream idea in existing derivations \cite{Gostar17,DaKai_Li_DCAI19,Gostar20,Gao20cphd,Gao20GLMB} is actually undesired to the local filters that need to maintain a closed-form filter-iteration. To fulfil the need of preserving the \revisNew{form} of the fusing MPD, ad-hoc approximation is often needed for fitting the MPDs \cite{DaKai_Li_DCAI19,Gostar20,Gao20cphd,Gao20GLMB}. In spite of this, only homogeneous MPDs are able to be fused. To overcome this problem, we resort to fusion in terms of the first moment density of the MPD rather than the MPD itself. In this section, we first derive the PHD-AA fusion from the MPD-AA fusion in this section. Then, a statistical concept, \textit{PHD consistency}, is proposed which explains how the PHD-AA fusion can combat false/missing data and improve the accuracy in target detection and localization. Finally, the proper, exact formulation of the AA fusion of the PHD filter and of the CPHD filter is derived.}

\subsection{From MPD-AA to Cardinality-AA and PHD-AA}
The MPD-AA fusion is given by 
\begin{equation}\label{eq:RFS-AA-Whole}
{f_{\text{AA}}}({X}) \triangleq \sum\limits_{i \in {\mathcal{I}}} {{w_i}{f_i}({X})}
\end{equation}
which corresponds to the following BFoM, c.f., \eqref{eq:AA-density-KLD},
\begin{equation}\label{eq:RFS-AA-Whole-KLD}
 {f_{\text{AA}}}({X}) = \operatorname*{arg\,min}_{g\in\mathcal{F}_\mathbb{X}} \sum\limits_{i \in {\mathcal{I}}} {w_iD_\text{KL}\big(f_i\|g\big)}
\end{equation}
where $\mathcal{F}_\mathbb{X} \triangleq \{f: \mathbb{X} \rightarrow \mathbb{R} \}$ and the set KL divergence is defined as
\begin{equation}\label{eq:def_set-kld}
  D_\text{KL}\big(f\|g \big) \triangleq \int_\mathbb{X} {f(X)\log \frac{f(X)}{g(X)} \delta X}
\end{equation}


Notably, according to the definition of the PHD \eqref{def-PHD}, the MPD-AA fusion \eqref{eq:RFS-AA-Whole} implies cardinality-AA fusion and PHD-AA fusion, respectively, i.e.,
\begin{align}
 \rho_{\text {MPD-AA}}(n) & = \int_{|{X}| = n} {f_{\text{AA}}({X})\delta {X}} \nonumber \\
 & = \sum\limits_{i \in {\mathcal{I}}} {w_i} \int_{|{X}| = n} {f_{i}({X})\delta {X}} \nonumber \\
 & =   \sum\limits_{i \in {\mathcal{I}}} {w_i} \rho_{i}(n) \nonumber \\
 & \triangleq \rho_{\text {AA}}(n) \label{eq:def-rho-AA} \\
D_{\text {MPD-AA}}(\mathbf{x}) & = \int_\mathbb{X}   {\Big(\sum_{\mathbf{w}\in X}{\delta_\mathbf{w}}(\mathbf{x}) \Big)f_{\text{AA}}(X)\delta X} \nonumber \\
& =  \sum\limits_{i \in {\mathcal{I}}} {{w_i}{D_i}(\mathbf{x})} \nonumber \\
& \triangleq D_{\text {AA}}(\mathbf{x}) \label{eq:def-PHD-AA}
\end{align}
which corresponds to the following BFoM, c.f., \eqref{eq:RFS-AA-Whole-KLD},
\begin{align}
\rho_{\text {AA}}(n) & = \operatorname*{arg\,min}_{\rho:  \mathbb{N}\rightarrow \mathbb{R}} \sum\limits_{i \in {\mathcal{I}}} {w_i D_\text{KL}\big(\rho_i\|\rho\big)} \label{eq:rho-AA-KLD} \\
 D_{\text{AA}}({\mathbf{x}}) & = \operatorname*{arg\,min}_{g\in \mathcal{F}_{\mathcal{X}}} \sum\limits_{i \in {\mathcal{I}}} {w_i D_\text{KL}\big(D_i\|g\big)} \label{eq:PHD-AA-KLD}
\end{align}

The cardinality and PHD are of clear physical significance and their average calculation has obvious benefit as explained next. Even more importantly, the PHD is not limited to any specific \revisNew{RFS density forms as the MPD fusion does and so the PHD-AA fusion can be straightforwardly applied to any RFS filters.} 
Obviously, the MPD average/consensus implies the PHD average/consensus but not vice versa.

\subsection{Motivation for PHD-AA Fusion} \label{sec:Motivation-phd-AA-fusion}
The PHD is a key statistical character of the RFS distribution, and so 
an accurate PHD estimate is vital in any RFS filters. On this point, we present the following definition relative to the effectiveness of the multi-sensor fusion. 
\begin{definition}[PHD Consistency] The accuracy of the multi-sensor fused PHD increases with the number of fusing estimators. 
That is, the PHD consistent fusion should result in statistically more accurate detection of the targets and the fused PHD converges to the first moment of the true MPD as the number of the fusing estimators increases indefinitely. 
\end{definition}

We note that our above definition abides with the standard definition of the consistency \revisNew{for parameter estimation}, i.e., as the number of data points used increases indefinitely, the resulting sequence of estimates converges in probability to the ground truth. 
This is different from the concept given in \cite{Julier06consistency,Uney19consistency}, which means consenus/agreement. 

\begin{lemma} \label{lemma:phd-aa-consistency}
\revisNew{Given that the estimates of the number of targets yielded} by the fusing estimators are conditionally independent with each other and unbiased everywhere in the state space,
the PHD-AA fusion \eqref{eq:def-PHD-AA} \revisNew{guarantees} 
PHD consistency.
\end{lemma}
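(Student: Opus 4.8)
\emph{Proof proposal.} The plan is to reduce the statement about the function-valued fused PHD to a family of scalar estimators---the expected target counts over measurable regions---and then combine unbiasedness with a variance-reduction argument of the law-of-large-numbers type. I would fix an arbitrary region $\mathcal{S} \subseteq \mathcal{X}$ and, following \eqref{eq:def-phd-integral=hatN}, regard the local estimate $\hat{N}_i^{\mathcal{S}} = \int_{\mathcal{S}} D_i(\mathbf{x})\, d\mathbf{x}$ as a random variable, the randomness coming from the observation process that drives each local filter (recall that this dependence was suppressed in the notation). Writing $D^{\star}$ for the first moment density of the true MPD and $N^{\star\mathcal{S}} = \int_{\mathcal{S}} D^{\star}(\mathbf{x})\, d\mathbf{x}$, the hypothesis that the local estimates are unbiased everywhere in the state space means $\mathbb{E}[D_i(\mathbf{x})] = D^{\star}(\mathbf{x})$ for all $\mathbf{x}$, whence, by Fubini, $\mathbb{E}[\hat{N}_i^{\mathcal{S}}] = N^{\star\mathcal{S}}$ for every $i \in \mathcal{I}$ and every $\mathcal{S}$.

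Next I would establish unbiasedness and variance reduction of the fusion. By the linearity inherent in \eqref{eq:def-PHD-AA} (equivalently, the AA-expectation identity \eqref{eq:exp_AA_H_func} applied to the indicator $H(\mathbf{x}) = \mathbf{1}_{\mathcal{S}}(\mathbf{x})$), the fused count is $\hat{N}_{\text{AA}}^{\mathcal{S}} = \int_{\mathcal{S}} D_{\text{AA}}(\mathbf{x})\, d\mathbf{x} = \sum_{i\in\mathcal{I}} w_i \hat{N}_i^{\mathcal{S}}$, so taking expectations and using $\sum_{i\in\mathcal{I}} w_i = 1$ gives $\mathbb{E}[\hat{N}_{\text{AA}}^{\mathcal{S}}] = N^{\star\mathcal{S}}$: the fused PHD integrated over any region is itself an unbiased estimator of the true expected target number there. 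The conditional-independence hypothesis then yields $\mathrm{Var}\big[\hat{N}_{\text{AA}}^{\mathcal{S}}\big] = \sum_{i\in\mathcal{I}} w_i^2\, \mathrm{Var}\big[\hat{N}_i^{\mathcal{S}}\big]$, and under uniform weights $w_i = 1/I$ (or any weight sequence with $\max_{i} w_i \to 0$) together with uniformly bounded local variances $\mathrm{Var}[\hat{N}_i^{\mathcal{S}}] \le \sigma^2 < \infty$, this is at most $\sigma^2 / I$. Hence the mean-squared error of $\hat{N}_{\text{AA}}^{\mathcal{S}}$ decreases monotonically in $I$ and, by Chebyshev's inequality, $\hat{N}_{\text{AA}}^{\mathcal{S}} \to N^{\star\mathcal{S}}$ in probability as $I \to \infty$. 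Since $\mathcal{S}$ was arbitrary and a $\sigma$-finite measure is determined by its values on measurable sets, this is exactly the asserted PHD consistency: unbiased and strictly more accurate detection for larger $I$, with $D_{\text{AA}}$ converging to the first moment of the true MPD.

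The step I expect to be the main obstacle---more a matter of rigor than of real difficulty---is making precise the sense in which the \emph{density} $D_{\text{AA}}$ converges to $D^{\star}$: the scalar argument above gives region-wise (hence weak, measure-theoretic) convergence, and upgrading it to a pointwise or $L_1 / L_2$ statement would require an extra uniform-integrability or domination hypothesis on the family $\{D_i\}_{i \in \mathcal{I}}$. A secondary point worth stating explicitly is the non-degeneracy of the fusion weights: were $\mathbf{w}$ allowed to concentrate on a single sensor the fused variance would not vanish, so the lemma tacitly needs $\|\mathbf{w}\|_2 \to 0$ (satisfied by the canonical choice $w_i = 1/I$), and I would include this condition alongside the stated hypotheses.
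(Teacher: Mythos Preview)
Your proposal is correct and follows essentially the same route as the paper: reduce to region-wise expected target counts $\hat{N}^{\mathcal{S}}$, use linearity of the AA to write the fused count as $\sum_i w_i \hat{N}_i^{\mathcal{S}}$, invoke unbiasedness and conditional independence to get $\mathrm{MSE}_{\text{AA}}^{\mathcal{S}} = \sum_i w_i^2\, \mathrm{MSE}_i^{\mathcal{S}}$, and conclude $1/I$ decay under uniform weights. Your version is in fact more careful than the paper's---you invoke Fubini and Chebyshev explicitly, distinguish measure-level from density-level convergence, and flag the weight non-degeneracy condition---whereas the paper simply bounds the MSE by $\tfrac{1}{I}\max_i \mathrm{MSE}_i^{\mathcal{S}}$ and asserts consistency.
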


\begin{proof}
Denoting by $\text{MSE}_{i}^{\mathcal{S}}$ the mean-square-error (MSE) of the estimated number $\hat{N}_{i}^{\mathcal{S}}$ of targets in any region $\mathcal{S}\subseteq \mathcal{X}$ obtained by filter $i \in {\mathcal{I}}$, the estimated number of targets given by the integral of the AA fused PHD in $\mathcal{S}$ is, \revisNew{c.f. \eqref{eq:exp_AA_H_func}}
$$\hat{N}_\text{PHD-AA}^{\mathcal{S}} = \int_{\mathcal{S}} D_\text{AA}(\mathbf{x})d \mathbf{x} = \int_{\mathcal{S}} \sum\limits_{i \in {\mathcal{I}}} {w_i D_i(\mathbf{x})}  d \mathbf{x} =  \sum\limits_{i \in {\mathcal{I}}} {w_i \hat{N}_i^{\mathcal{S}}}$$
The PHD consistency is ensured in the following sense
\begin{align}
  \text{MSE}_\text{PHD-AA}^{\mathcal{S}} & = \mathbb{E}\big[(\hat{N}_\text{PHD-AA}^{\mathcal{S}}-{N}^{\mathcal{S}})^2\big] \nonumber \\
  & = \mathbb{E}\Big[\big(\sum\limits_{i \in {\mathcal{I}}} {w_i \hat{N}_i^{\mathcal{S}}}-{N}^{\mathcal{S}}\big)^2\Big] \nonumber \\
  & = \sum\limits_{i \in {\mathcal{I}}}w_i^2\mathbb{E}\Big[\big( { \hat{N}_i^{\mathcal{S}}}-{N}^{\mathcal{S}}\big)^2\Big] \label{eq:indepent-card-est} \\
  & = \sum\limits_{i \in {\mathcal{I}}}w_i^2 \text{MSE}_i^{\mathcal{S}} \label{eq:phd-AA-accuracy}
\end{align}
where expectation $ \mathbb{E}[\cdot]$ is conditional on the observation, $N^{\mathcal{S}}$ denotes the real number of targets in region $\mathcal{S}$, and conditional independence and unbiasedness were used in \revisNew{\eqref{eq:indepent-card-est}}.

By using $w_i= \frac{1}{I}, \forall i  \in {\mathcal{I}}$, \eqref{eq:phd-AA-accuracy} reduces to
\begin{align}\label{eq:mse-I-mean}
  \text{MSE}_\text{AA}^{\mathcal{S}} &= \frac{1}{I} \sum\limits_{i \in {\mathcal{I}}}w_i \text{MSE}_i^{\mathcal{S}} \\
   & \leq  \frac{1}{I}  \max\{\text{MSE}_i^{\mathcal{S}}\}_{i  \in {\mathcal{I}}}
\end{align}
where the equation holds iff $\text{MSE}_i^{\mathcal{S}} = \text{MSE}_j^{\mathcal{S}}, \forall i, j \in {\mathcal{I}}$ and the consistency is indicated by
\begin{equation}
  \lim_{I\rightarrow\infty} \text{Pr}[\text{MSE}_\text{PHD-AA}^{\mathcal{S}} >\varepsilon]=0 \label{eq:MSEconsistency}
\end{equation}
for every $\varepsilon>0$
\end{proof}

The PHD-AA fusion can be illustrated in Fig. \ref{fig:phd-aa}. 
Beyond the PHD/CPHD filters that directly calculate the PHD in time series, many \revisNew{RFS/LRFS filters are obtained based on PHD matching}, e.g., \cite{Reuter14LMB,Fantacci15M-glmb,wang16MB-gci,Lu17MomentLRFS,Saucan17MS_MB,Da_Li20TSIPN}. The PHD accuracy plays the same important role in these filters.
In addition, the cardinality-AA fusion \revisNew{has the similar property}.


\begin{figure}
  \centering
  \includegraphics[width=6.5cm]{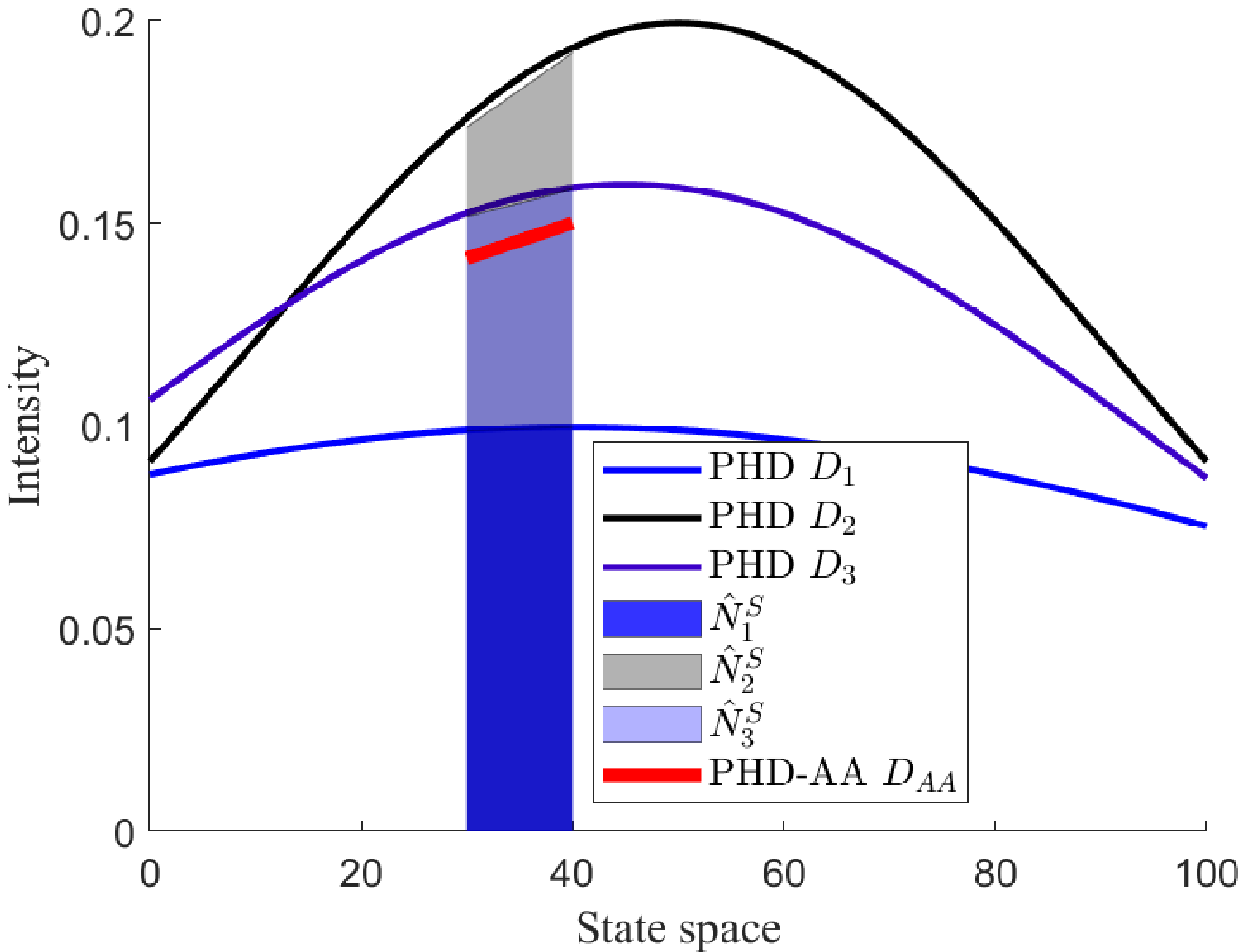}\\
  \caption{The PHD-AA fusion will lead to statistically more accurate and robust detection of the present targets when the estimate of each filter is statistically unbiased and independent on each other. $\hat{N}_i^{\mathcal{S}}$ denotes the estimated number of targets in region $\mathcal{S}$ yielded by RFS filter $i=1,2,3$. }\label{fig:phd-aa}
  \vspace{-3mm}
\end{figure}

\begin{figure}
  \centering
  \includegraphics[width=6.5cm]{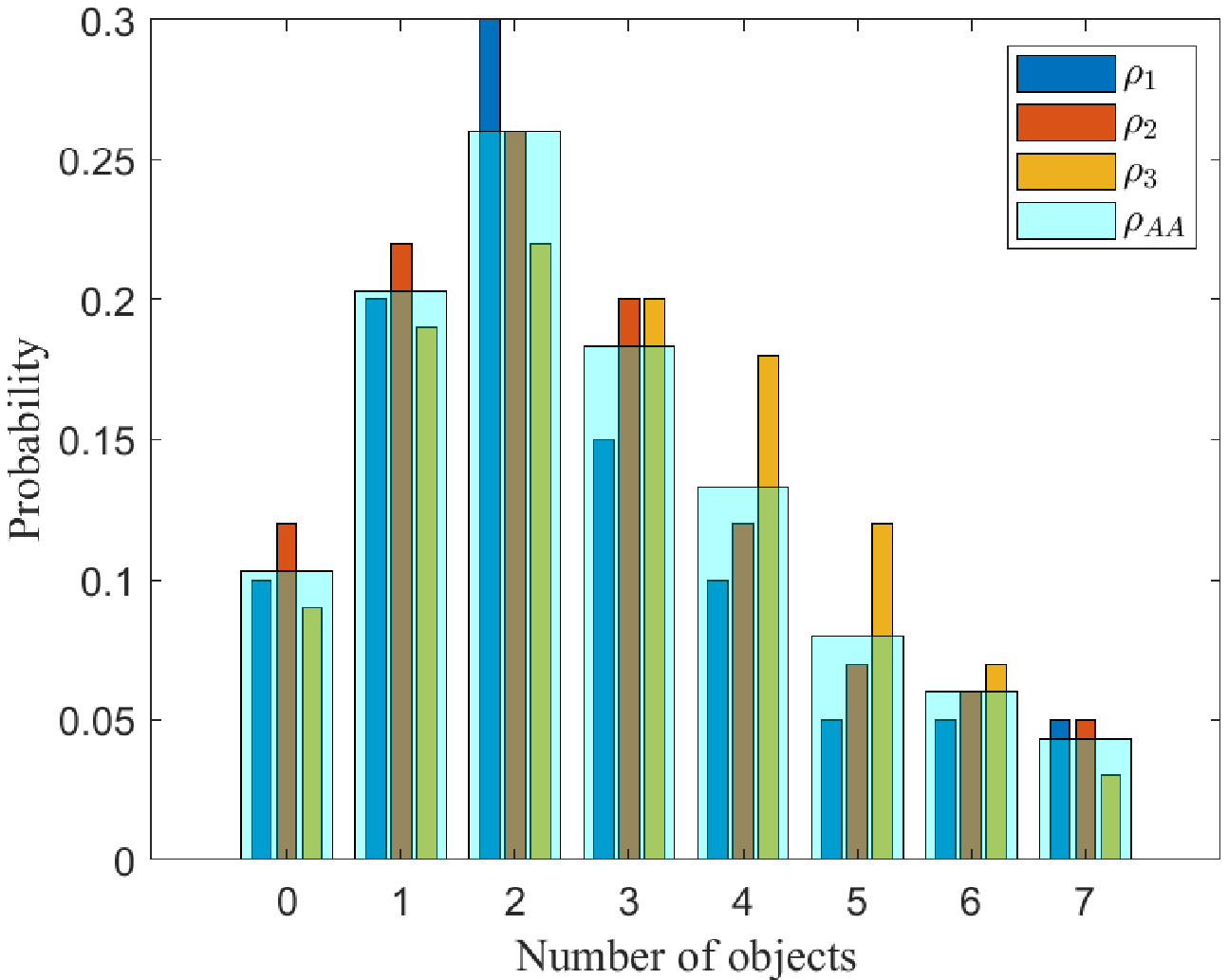}\\
  \caption{The cardinality-AA fusion will lead to statistically more accurate estimate of the number of targets when the cardinality estimate of each filter is statistically unbiased and independent on each other. $\rho_i$ denotes the estimated cardinality \revisNew{probability mass} function yielded by RFS filter $i=1,2,3$. }\label{fig:cardinality-aa}
  \vspace{-3mm}
\end{figure}

\begin{corollary}
Assuming that the fusing cardinality estimates $\rho(n)$ are conditionally independent with each other and unbiased $\forall n=0,1,2,...$, the AA-fused cardinality \eqref{eq:def-rho-AA} with properly designed fusing weights will gain more accurate estimate in the sense of smaller MSE.
\end{corollary}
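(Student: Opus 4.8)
The plan is to mirror exactly the argument already carried out for Lemma~\ref{lemma:phd-aa-consistency}, transporting it from the PHD/number-of-targets setting to the cardinality-probability setting. Concretely, for each fixed $n=0,1,2,\dots$ I would treat $\rho_i(n)$ as a scalar estimator of the true cardinality probability $\rho(n)\triangleq\mathrm{Pr}\{|X|=n\}$, and study the estimation error $\rho_i(n)-\rho(n)$. The AA-fused cardinality is $\rho_\text{AA}(n)=\sum_{i\in\mathcal{I}}w_i\rho_i(n)$ by \eqref{eq:def-rho-AA}, which is a convex combination, so unbiasedness of each $\rho_i(n)$ immediately gives unbiasedness of $\rho_\text{AA}(n)$, and the error decomposes as $\rho_\text{AA}(n)-\rho(n)=\sum_{i\in\mathcal{I}}w_i\big(\rho_i(n)-\rho(n)\big)$.

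First I would write the MSE of the fused cardinality probability at bin $n$ as
\begin{align}
\text{MSE}_\text{AA}(n) &= \mathbb{E}\big[(\rho_\text{AA}(n)-\rho(n))^2\big] \nonumber \\
&= \mathbb{E}\Big[\big(\sum_{i\in\mathcal{I}}w_i(\rho_i(n)-\rho(n))\big)^2\Big]. \nonumber
\end{align}
Then, expanding the square and invoking the two hypotheses — conditional independence of the $\rho_i(n)$ across $i$, so that all cross terms $\mathbb{E}[(\rho_i(n)-\rho(n))(\rho_j(n)-\rho(n))]$ with $i\neq j$ vanish by unbiasedness; and unbiasedness itself to eliminate the linear terms — the double sum collapses to the diagonal, giving $\text{MSE}_\text{AA}(n)=\sum_{i\in\mathcal{I}}w_i^2\,\text{MSE}_i(n)$, in complete analogy with \eqref{eq:indepent-card-est}--\eqref{eq:phd-AA-accuracy}. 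From here, choosing the weights appropriately (the uniform choice $w_i=1/I$ being the canonical one, exactly as in \eqref{eq:mse-I-mean}) yields $\text{MSE}_\text{AA}(n)=\tfrac{1}{I}\sum_{i\in\mathcal{I}}w_i\,\text{MSE}_i(n)\le\tfrac{1}{I}\max_i\text{MSE}_i(n)$, which is strictly smaller than the worst individual MSE unless all agree. Summing (or taking a supremum) over $n$ then delivers the global statement, and the $I\to\infty$ limit gives the probabilistic consistency claim $\lim_{I\to\infty}\text{Pr}[\text{MSE}_\text{AA}(n)>\varepsilon]=0$, paralleling \eqref{eq:MSEconsistency}.

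The proof is essentially the same linear-estimator variance-reduction computation, so the only genuine subtlety — and the step I would be most careful about — is the interpretation of ``properly designed fusing weights'' in the statement. Unlike the uniform-weight corollary of the lemma, here one wants to allow weights that actually minimize $\sum_i w_i^2\,\text{MSE}_i(n)$ subject to $\mathbf{w}\in\mathbb{W}$; the optimal choice is the inverse-MSE weighting $w_i\propto 1/\text{MSE}_i(n)$, which gives $\text{MSE}_\text{AA}(n)=\big(\sum_i 1/\text{MSE}_i(n)\big)^{-1}\le\min_i\text{MSE}_i(n)$. I would state this explicitly so that ``more accurate estimate in the sense of smaller MSE'' is anchored to a concrete weight rule rather than left vague. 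A secondary point worth a sentence is that $\rho(n)$ for different $n$ are coupled by $\sum_n\rho(n)=1$, but since the claim is made bin-by-bin (or for any fixed region/functional of the cardinality), this coupling does not affect the argument; one simply notes that the same reasoning applies verbatim to any linear functional $\sum_n h(n)\rho_i(n)$ of the cardinality distribution, e.g.\ the expected cardinality $\sum_n n\rho_i(n)$, which recovers the connection to the expected-number-of-targets statement of the lemma.
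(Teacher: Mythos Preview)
Your proposal is correct and follows essentially the same approach as the paper: write the fused MSE at each $n$, use conditional independence and unbiasedness to collapse the cross terms, obtain $\text{MSE}_\text{AA}(n)=\sum_{i\in\mathcal{I}}w_i^2\,\text{MSE}_i(n)$, and then argue as in \eqref{eq:mse-I-mean}--\eqref{eq:MSEconsistency}. Your extra remarks on the optimal inverse-MSE weights and on the coupling $\sum_n\rho(n)=1$ go beyond what the paper provides but do not change the route.
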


\begin{proof}
The proof is similar to that for Lemma \ref{lemma:phd-aa-consistency}.
Denote by $\text{MSE}_{i}^{(n)}$ the MSE of the estimated probability $\rho_{i}{(n)}$ of $n$ targets in total obtained by filter $i \in {\mathcal{I}}$. Given that $\rho_{i}{(n)},i \in \mathcal{I}$ are conditionally statistically independent with each other and unbiased, the cardinality-AA fusion leads to
\begin{align}
  \text{MSE}_\text{AA}^{(n)} & = \mathbb{E}\big[(\rho_\text{AA}{(n)}- \text{Pr}[N=n])^2\big] \nonumber \\
  & = \mathbb{E}\Big[\big(\sum\limits_{i \in {\mathcal{I}}} {w_i \rho_i{(n)}}-\text{Pr}[N=n]\big)^2\Big] \nonumber \\
  & = \sum\limits_{i \in {\mathcal{I}}}w_i^2\mathbb{E}\Big[\big( { \rho_i{(n)}}-\text{Pr}[N=n]\big)^2\Big] \label{eq:indepent-card-est-2} \\
  & = \sum\limits_{i \in {\mathcal{I}}}w_i^2 \text{MSE}_i^{(n)} \label{eq:MSE-AA-cardinality}
\end{align}
where $\text{Pr}[N=n]$ denotes the probability of $n$ targets in total. 
Then, similar result as \eqref{eq:MSEconsistency} can be obtained.
\end{proof}

\begin{remark} \label{remark:cardinality-wise-AA}
In addition to the consistency in terms of the estimator accuracy, the AA fusion can reduce the effects of clutter and missed detections \cite{Li17PC,Li17PCsmc} since the mixing operation and averaging calculation are insensitive to false and missing data, improving the robustness of the local filters. In the case of low detection profile, the AA fusion has a significant advantage as compared with the Bayes/multiplication-based fusion approaches. This, however, cannot be explained by the BFoM/MIL formulation. 
Simply, the GA fusion that minimizes similar divergence \eqref{eq:GA-density-KLD} has, however, been proven vulnerable to local misdetection \cite{Yu16,Li17merging,Li17PC,Li17PCsmc,Gao20cphd,Gao20GLMB,Da_Li20TSIPN,GLi22aaPMB} and may even underestimate the number of targets \cite{Gunay16,Li19Second,Uney19consistency}. 
\end{remark}

\subsection{PHD Filter Fusion} \label{sec:phd-AA-fusion}
By assuming a Poisson RFS formulation of the multi-target states as in \eqref{eq:Poisson-PD} where the Poisson parameter $\lambda =\int_{\mathcal{X}} D(\mathbf{x})d \mathbf{x}$, the PHD filter \cite{Mahler03,Vo06} 
recursively propagates the PHD in lieu of the Poisson MPD itself. As such, the PHD-AA fusion is very appropriate for the PHD filter fusion, where the resulting MPD is uniquely determined by the AA-fused PHD: 
\begin{equation}\label{eq:PHD-AA-PPP}
f_{\text{PHD-AA}}^{\text{p}}(X) = e^{-\int_{\mathcal{X}} D_{\text{AA}}(\mathbf{x})d \mathbf{x}} \prod_{\mathbf{x}\in X}D_{\text{AA}}(\mathbf{x})
\end{equation}
%
%
%
%
%
The PHD-AA fusion \eqref{eq:def-PHD-AA} for the PHD filter is consistent with the IIDC-AA fusion given in Lemma \ref{lemma:IIDC-AA}, which also \revisNew{amounts to} 
averaging 
the Poisson parameters, i.e., 
\begin{align}
  \lambda_{\text{PHD-AA}} & = \int_{\mathcal{X}} D_\text{AA}(\mathbf{x})d \mathbf{x} \nonumber  \\
   & = \sum\limits_{i \in {\mathcal{I}}} {w_i\lambda_i} \label{eq:phd-lambda-aa}
\end{align}

The PHD filter is actually the first testbed for applying the AA fusion in the context of multi-sensor RFS density fusion. A pioneering attempt \revisNew{of adopting averaging-for-fusion in the context of RFS filters was proposed with regard to the multi-target likelihood \cite{Streit08} which actually leads to the same result as averaging the posteriors if the priors are the same. }
Based on the average consensus approach and distributed peer-to-peer networks \cite{Li19Second,Li2021SomeResults}, the PHD-AA fusion has been proposed in \cite{Li17merging,Li17PC} and \cite{Li17PCsmc,Li19Diffusion} based on GM and particles implementation, respectively. Greatly, both can be viewed as a mixture distribution and 
allows greatly for closed-form
GM-PHD-AA and particle-PHD-AA fusion, respectively.
For extremely efficient communication and fusion-computation, one may resort to the cardinality-only fusion \cite{Li19CC} which shares and fuses only the cardinality (variables in the case of PHD filters and functions in the case of CPHD filters).

Following the same line of thinking, the CPHD filter \cite{Mahler07cphd, Vo07cphd} that recursively propagates the PHD and cardinality distribution can be fused in means of cardinality and PHD fusion, respectively \cite{Yu16,DaKai_Li_DCAI19}. This coincides with the result of the IIDC-AA fusion given in Lemma \ref{lemma:IIDC-AA}. 
Further issues such as limited field of view and asynchronous sensing have been addressed in \cite{Li19Diffusion,Yi20AAfov} and in \cite{GLi20AA-asynchronousPHD,Yu22asynchronousCPHD}, respectively. 

\subsection{IIDC PHD-AA Fusion}
The IIDC \eqref{eq:IIDC-p} point process which is the backbone of the CPHD filter \cite{Mahler07cphd,Vo07cphd}
is completely specified by the cardinality distribution $p(n)$ and the SPD $s(\mathbf{x})$.
We have the following important IIDC PHD-AA fusion result
\begin{lemma} \label{lemma:IIDC-AA}
For IIDC MPDs $f_{i}^\text{iidc}({X}_n) = n! \rho_i(n) \prod_{l=1}^{n} s_i(\mathbf{x})$, 
$i\in \mathcal{I}$, 
the cardinality-AA and PHD-AA fusion \eqref{eq:def-PHD-AA} using fusion weights $\mathbf{w} \in \mathbb{W}$ result in
\begin{equation}\label{eq:iidc-phd-aa-in-full}
  f_\text{PHD-AA}^\text{iidc}({X}_n) = n! \rho_\text{AA}(n) \prod_{l=1}^{n} s_\text{PHD-AA}(\mathbf{x})
\end{equation}
with AA fused cardinality distribution $\rho_{\text{AA}}(n)$ as given in \eqref{eq:def-rho-AA},
and the cardinalized-AA fused SPD as follows
\begin{align} %
s_\text{PHD-AA}(\mathbf{x})  
& = \frac{1}{\hat{N}_\text{AA}} \sum\limits_{i \in {\mathcal{I}}} {w_i \hat{N}_i s_i(\mathbf{x})} \label{eq:iidc-p-kld-s-gao}
\end{align}
where $\hat{N}_i \triangleq \int_{\mathcal{X}} {D_i(\mathbf{x})}  d \mathbf{x} $ 
and $\hat{N}_\text{AA} \triangleq \sum\limits_{i \in {\mathcal{I}}} {w_i \hat{N}_i}$. 
%
\end{lemma}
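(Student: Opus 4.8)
The plan is to exploit the fact that an IIDC point process is completely determined by the pair $\big(\rho(n), s(\mathbf{x})\big)$, together with the internal consistency constraint that the integral of its PHD equals the mean of its cardinality distribution. First I would record the elementary identity that, for each local IIDC, \eqref{eq:iidc-phd} makes the PHD factorize as $D_i^{\text{iidc}}(\mathbf{x}) = \big(\sum_{n=0}^\infty n\rho_i(n)\big) s_i(\mathbf{x}) = \hat{N}_i\, s_i(\mathbf{x})$, where $\hat{N}_i = \sum_{n} n\rho_i(n) = \int_{\mathcal{X}} D_i(\mathbf{x})\, d\mathbf{x}$ is the mean number of targets; hence each local SPD is recovered from the local PHD by normalization, $s_i(\mathbf{x}) = D_i(\mathbf{x})/\hat{N}_i$.

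Next I would apply the cardinality-AA rule \eqref{eq:def-rho-AA} to get $\rho_\text{AA}(n) = \sum_{i\in\mathcal{I}}w_i\rho_i(n)$ and compute its mean by interchanging the nonnegative, absolutely convergent double sum: $\sum_n n\rho_\text{AA}(n) = \sum_{i\in\mathcal{I}} w_i\sum_n n\rho_i(n) = \sum_{i\in\mathcal{I}} w_i\hat{N}_i = \hat{N}_\text{AA}$. In parallel, the PHD-AA rule \eqref{eq:def-PHD-AA} gives $D_\text{AA}(\mathbf{x}) = \sum_{i\in\mathcal{I}} w_i D_i(\mathbf{x}) = \sum_{i\in\mathcal{I}} w_i\hat{N}_i s_i(\mathbf{x})$. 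Crucially, these two separately fused quantities are mutually consistent in the IIDC sense, since $\int_{\mathcal{X}} D_\text{AA}(\mathbf{x})\,d\mathbf{x} = \sum_{i\in\mathcal{I}} w_i\hat{N}_i = \hat{N}_\text{AA} = \sum_n n\rho_\text{AA}(n)$, so there is a genuine IIDC realizing both.

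It then remains only to identify the SPD of that IIDC. Applying the factorization of the first paragraph to the fused process, its SPD must satisfy $\hat{N}_\text{AA}\, s_\text{PHD-AA}(\mathbf{x}) = D_\text{AA}(\mathbf{x})$; solving and substituting $D_\text{AA}$ yields
\[
s_\text{PHD-AA}(\mathbf{x}) = \frac{1}{\hat{N}_\text{AA}}\sum_{i\in\mathcal{I}} w_i\hat{N}_i\, s_i(\mathbf{x}),
\]
which is exactly \eqref{eq:iidc-p-kld-s-gao}. A quick check confirms this is a bona fide SPD: the coefficients $w_i\hat{N}_i/\hat{N}_\text{AA}$ are nonnegative and sum to one, so $s_\text{PHD-AA}$ is a convex combination of probability densities, hence nonnegative with unit integral. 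Substituting $\rho_\text{AA}$ and $s_\text{PHD-AA}$ into the IIDC template \eqref{eq:IIDC-p} delivers \eqref{eq:iidc-phd-aa-in-full}, completing the argument.

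I do not anticipate a serious obstacle; the only point needing care is the mutual-consistency check in the second paragraph (ensuring the separately fused cardinality and PHD can be simultaneously hosted by a single IIDC), plus the degenerate case $\hat{N}_\text{AA}=0$, which forces every $\hat{N}_i=0$ and hence all processes empty almost surely. If desired, the same conclusion also follows from the BFoM characterizations \eqref{eq:rho-AA-KLD}--\eqref{eq:PHD-AA-KLD}, but the direct moment computation above is the cleanest route.
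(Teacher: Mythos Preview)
Your proposal is correct and follows essentially the same route as the paper: both take the cardinality-AA directly, form the PHD-AA $D_\text{AA}(\mathbf{x})=\sum_i w_i D_i(\mathbf{x})$, and then recover the fused SPD by normalizing the fused PHD, $s_\text{PHD-AA}(\mathbf{x})=D_\text{AA}(\mathbf{x})/\int D_\text{AA}$. Your additional consistency check (that $\int D_\text{AA}=\sum_n n\rho_\text{AA}(n)$) and the handling of the degenerate case $\hat{N}_\text{AA}=0$ add rigor that the paper's proof leaves implicit, but the core argument is the same.
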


\begin{proof} The cardinality-AA results in straightforwardly the AA fused cardinality distribution $\rho_{\text{AA}}(n)$.
The PHD-AA fusion of the IIDCs can be expressed as follows
\begin{align}
D_{\text{AA}}^\text{iidc}(\mathbf{x}) & = \sum_{i \in \mathcal{I}} {w_i D_i^\text{iidc}(\mathbf{x})}  \label{eq:IIDC-PHD-AA} 
\end{align}
\revisNew{Since} the SPD is the normalized PHD, one can easily get \eqref{eq:iidc-p-kld-s-gao} from \eqref{eq:IIDC-PHD-AA}, i.e.,
\begin{align}
s_{\text{PHD-AA}}(\mathbf{x}) & = \frac{D_{\text{AA}}^\text{iidc}(\mathbf{x})}{\int_{\mathcal{X}} D_{\text{AA}}^\text{iidc}(\mathbf{x}) d \mathbf{x}} \nonumber \\
& = \frac{\sum_{i \in \mathcal{I}} {w_i D_i(\mathbf{x})} }{\sum_{i \in \mathcal{I}}w_i \int_{\mathcal{X}} {D_i(\mathbf{x})}  d \mathbf{x}} \nonumber \\
& = \frac{1}{\hat{N}_\text{AA}} \sum\limits_{i \in {\mathcal{I}}} {w_i \hat{N}_i s_i(\mathbf{x})}
\end{align}
where as defined in \eqref{eq:def-phd-integral=hatN}, $\hat{N}_i = \int_{\mathcal{X}} {D_i(\mathbf{x})}  d \mathbf{x}$.
\end{proof}

Note that the AA fused cardinality and PHD for IIDC fusion were first proposed in \cite[Theorem 4]{DaKai_Li_DCAI19}. The approximate BFoM/MIL derivation of the fused cardinality \eqref{eq:def-rho-AA} and fused SPD \eqref{eq:iidc-p-kld-s-gao} can be found in \cite[Proposition 1]{Gao20cphd}. Comparably, our derivation based on the PHD consistency imposes no \revisNew{ad-hoc} approximation and has a both theoretically and intuitively sound interpretation: the PHD-AA fusion leads to more accurate and robust detection of the present targets.

\section{Unlabeled Bernoulli/MB/MBM PHD-AA fusion} \label{sec:BMs} 
The Bernoulli distribution is conjugate prior for single-target filtering which suits 
for detecting and tracking at most one target in the presence of clutter and misdetection. In fact, the Bernoulli RFS has been an important cornerstone for developing many multi-target filters/trackers which can be briefly classified into two groups: One is the mixture such as the MBM \cite{Angel18PMBMdeivation} which admits closed-form AA fusion and the others such as the MB \cite{Vo09CBmember} do not. \revisNew{Loosely speaking,} when the BCs are labeled, they lead to ($\delta$-) GLMB \cite{Vo13Label,Vo14GLMB} and its simplified versions including the M-GLMB \cite{Fantacci15M-glmb}, LMB \cite{Reuter14LMB}, respectively. When an extra Poisson component is used (for representing the locally undetected targets), they lead to the PMBM \cite{Williams15taesPMBM} and PMB \cite{Williams15TSPpmb}. When both label and Poisson components are used, it leads to Poisson LMB \cite{Cament20PLMB}. 

The fusion of these mixtures {needs} to follow the first principle of \textit{target-wise fusion}\cite{Li20AAmb,Marano22}, i.e., only the information regarding the same, at least highly potentially the same, target should be fused. 
It is simply technically nonsensical to fuse the components that correspond to different targets. This target-wise fusion rule plays a key role in producing more accurate localization of the present targets. 
To comply with this rule, the Poisson component in the PMBM/PMB is not fused \cite{Li23AApmbm,GLi22aaPMB} since the locally undetected targets are by no means \revisNew{guaranteed to be} the same across sensors. 
For the remaining MBM/MB, BC to BC association and then \revisNew{merging} are the key in accurately localizing each detected target. 
%


\subsection{Bernoulli PHD-AA Fusion}

\begin{corollary} \label{corollary:Bernoulli}
For \revisNew{a group of} Bernoulli densities  $f^{\text{b}}_i(X)$ \revisNew{parameterized by} $\big(r_i, s_i(\mathbf{x})\big)$, $i \in \mathcal{I}$, 
the PHD-AA fusion \eqref{eq:def-PHD-AA} using fusion weights $\mathbf{w} \in \mathbb{W}$ will result in
\begin{align}\label{eq:iidc-phd-aa}
  f_\text{PHD-AA}^\text{b}({X}) = \begin{cases}
1-r_{\text{PHD-AA}} & X  =\emptyset\\
r_{\text{PHD-AA}}s_\text{PHD-AA}\left(\mathbf{x}\right) & X =\left\{ \mathbf{x}\right\} \\
0 & \mathrm{otherwise}
\end{cases}
\end{align}
where 
\begin{align}
  r_\text{PHD-AA}  & = \sum\limits_{i \in {\mathcal{I}}} {{w_i} {r_i}} \\
  s_\text{PHD-AA}(\mathbf{x}) & = \frac{1}{r_{\text{PHD-AA}}}{\sum\limits_{i \in {\mathcal{I}}} {{w_i}{r_i}{s_i}(\mathbf{x})} }
\end{align}
\end{corollary}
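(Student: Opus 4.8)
The plan is to compute the AA-fused PHD directly from the Bernoulli PHD formula \eqref{eq:Bernoulli-PHD} and then read off the parameters of the Bernoulli density it induces. First I would substitute $D_i(\mathbf{x}) = D^{\text{b}}_i(\mathbf{x}) = r_i s_i(\mathbf{x})$ into the PHD-AA definition \eqref{eq:def-PHD-AA}, obtaining $D_{\text{AA}}(\mathbf{x}) = \sum_{i \in \mathcal{I}} w_i r_i s_i(\mathbf{x})$. Integrating over $\mathcal{X}$ and using $\int_{\mathcal{X}} s_i(\mathbf{x}) d\mathbf{x} = 1$ gives the expected cardinality $\hat{N}_{\text{AA}} = \sum_{i \in \mathcal{I}} w_i r_i$, which I would identify with $r_{\text{PHD-AA}}$; normalizing $D_{\text{AA}}$ by this scalar yields $s_{\text{PHD-AA}}(\mathbf{x}) = r_{\text{PHD-AA}}^{-1} \sum_{i \in \mathcal{I}} w_i r_i s_i(\mathbf{x})$, which is a legitimate SPD since it is non-negative and integrates to one. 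This is nothing but the specialization of Lemma \ref{lemma:IIDC-AA} obtained by viewing a Bernoulli RFS as an IIDC cluster whose cardinality distribution is supported on $\{0,1\}$, so \eqref{eq:iidc-p-kld-s-gao} with $\hat{N}_i = r_i$ reproduces the claimed $s_{\text{PHD-AA}}$.

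Next I would argue that the fused object must actually take the Bernoulli form \eqref{eq:iidc-phd-aa}, i.e. that reconstructing an MPD from the fused PHD is unambiguous here. I would invoke the cardinality-AA fusion \eqref{eq:def-rho-AA}: since each $\rho_i$ is supported on $\{0,1\}$ with $\rho_i(1) = r_i$, the fused cardinality $\rho_{\text{AA}}$ is again supported on $\{0,1\}$ with $\rho_{\text{AA}}(1) = \sum_i w_i r_i = r_{\text{PHD-AA}}$, so the fused process contains at most one target and is therefore Bernoulli, with single-target density necessarily the normalized PHD $s_{\text{PHD-AA}}$. Equivalently, and perhaps cleanest, I would note that the Bernoulli family is closed under MPD-AA fusion: evaluating $f_{\text{AA}}(X) = \sum_i w_i f^{\text{b}}_i(X)$ from \eqref{eq:Bernoulli-PD} case by case ($X = \emptyset$, $X = \{\mathbf{x}\}$, $|X| \geq 2$) returns exactly \eqref{eq:iidc-phd-aa}, so the PHD-AA prescription and the MPD-AA prescription coincide on Bernoulli inputs.

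I do not expect a genuine technical obstacle; the computation is a one-line substitution followed by a normalization. The only point requiring care is the last one above: in general a PHD does not determine an MPD, so the assertion that the fused density \emph{is} a Bernoulli rests on the form-preservation argument (via the cardinality-AA, or via the direct closure computation), not on the fused PHD alone. I would state this explicitly so that the corollary is seen as an instance of the unified ``preserve the fusing form while averaging the PHD'' principle rather than an unjustified reconstruction.
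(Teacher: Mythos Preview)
Your proposal is correct and follows essentially the same route as the paper: compute $D_{\text{AA}}(\mathbf{x})=\sum_i w_i r_i s_i(\mathbf{x})$, integrate to obtain $r_{\text{PHD-AA}}=\sum_i w_i r_i$, and normalize to get $s_{\text{PHD-AA}}$, then observe that this is the unique Bernoulli matching the fused PHD. The paper likewise remarks afterward that this is a special case of Lemma~\ref{lemma:IIDC-AA} and coincides with the MPD-AA result, exactly as you note; your explicit cardinality-AA/closure argument for why the fused density must be Bernoulli is a welcome clarification of a point the paper handles in one sentence.
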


\begin{proof} Using \eqref{eq:Bernoulli-PHD}, the Bernoulli PHD-AA is
\begin{equation}\label{eq:Bernoulli-phd-aa}
D^\text{b}_{\text{AA}}(\mathbf{x})  = r_\text{PHD-AA} s_\text{PHD-AA}(\mathbf{x}) 
\end{equation}
where $r_\text{PHD-AA}$ and $s_\text{PHD-AA}(\mathbf{x})$ are so defined to ensure PHD consistency by $D^\text{b}_{\text{AA}}(\mathbf{x})  = \sum\limits_{i \in {\mathcal{I}}} {w_i} D^\text{b}_i(\mathbf{x}) = \sum\limits_{i \in {\mathcal{I}}} {w_i}r_i s_i(\mathbf{x})$. So, 

\begin{align}
r_\text{PHD-AA} & = \int_{\mathcal{X}} D^\text{b}_{\text{AA}}(\mathbf{x}) d \mathbf{x} \nonumber \\
& = \int_{\mathcal{X}} \sum\limits_{i \in {\mathcal{I}}} {w_i} D^\text{b}_i(\mathbf{x})d \mathbf{x} \nonumber \\
& = \sum\limits_{i \in {\mathcal{I}}} {{w_i} {r_i}}   
\end{align}
\begin{align}
s_\text{PHD-AA} & = \frac{1}{r_\text{PHD-AA}}D^\text{b}_{\text{AA}}(\mathbf{x}) \nonumber \\
& = \frac{1}{r_{\text{PHD-AA}}}{\sum\limits_{i \in {\mathcal{I}}} {{w_i}{r_i}{s_i}(\mathbf{x})} }  
\end{align}
In other words, the only BC that ensures PHD-AA fusion is given by the above Bernoulli-AA fusion.
\end{proof}

The Bernoulli process with parameters $\big(r, s(\mathbf{x})\big)$ is a special IIDC process with $\rho(0)=1-r, \rho(1)=r$ and $\rho(n>1)=0$. The above results can be viewed as a special case of Lemma \ref{lemma:IIDC-AA} and coincide with the following MPD-AA results given in \cite[Theorem 1]{Li19Bernoulli} and \cite[Lemma 1]{DaKai_Li_DCAI19}.
\begin{align}\label{eq:iidc-mpd-aa}
  f_\text{PHD-AA}^\text{b}({X}) =  \sum\limits_{i \in {\mathcal{I}}} {w_i}f_i^\text{b}({X})
\end{align}


%
%
%


\subsection{MB PHD-AA Fusion} \label{sec:MB-AA}
Extending the single-target-BC to the multi-target RFS, MB lies in the core of MBM \cite{Angel18PMBMdeivation}, ($\delta$-) GLMB \cite{Vo13Label,Vo14GLMB}, PMB \cite{Williams15TSPpmb}, M-GLMB \cite{Fantacci15M-glmb}, labeled MB (LMB) \cite{Reuter14LMB} and Poisson LMB \cite{Cament20PLMB}.
Obviously, the AA fusion of the MBs will lead to a mixture of MBs \cite{Li20AAmb} which should be reduced to a single MB by merging all those BCs corresponding to the same target. 
That is 
to carry out the AA fusion in parallel to these associated/matched BCs from different sensors, for which {the} similarity or divergence between these components plays a key role in evaluating whether the components are close enough and so be merged/fused to one.

Here, we consider the MB-AA fusion from the perspective of PHD-AA/consistency. 
Consider the MB RFS which has $n_i$ BCs in which the $j$-th BC has existence probability $r_{i,j}$ and SPD $s_{i,j}\left(\mathbf{x}\right)$. \revisNew{A 1D example consisting of two BCs represented by three SPDs associated with different existing probabilities is given in Fig. \ref{fig:PHDofMB}.}
 The corresponding PHD-AA for the MBs can be expressed as, c.f., \eqref{eq:def-PHD-AA},
\begin{align}
	D_{\text{AA}}^{\text {mb}}(\mathbf{x}) & =  \sum\limits_{i \in {\mathcal{I}}}{w_i D_i^{\text {mb}}(\mathbf{x}) } \nonumber \\
& =  \sum\limits_{i \in {\mathcal{I}}}{w_i  \sum_{j=1}^{n_i}{r_{i,j}}{s_{i,j}}(\mathbf{x}) } \nonumber \\ %
& =  \sum_{j=1}^{n_{f}}{r_{j,\text{PHD-AA}}}{s_{j,\text{PHD-AA}}}(\mathbf{x})  \label{eq:MB-PHA-AA}
\end{align}
which corresponds to the following fused MB MPD, c.f., \eqref{eq:def_MB},
\begin{equation}\label{MB-AA-MPD}
  	f_{\text{PHD-AA}}^\text{mb} (X_n)  = \sum_{\uplus_{j=1}^{n_{f}}X^{(j)}=X}\prod_{j=1}^{n_{f}}f_{\text{PHD-AA}}^\text{b} \left(X^{(j)}\right)
\end{equation}
where \revisNew{each multiple-to-one} fused BC $f_{\text{PHD-AA}}^\text{b} \left(X^{(j)}\right)$ with parameters $\big( r_{j,\text{PHD-AA}},{s_{j,\text{PHD-AA}}}(\mathbf{x}) \big)$ is given in \eqref{eq:iidc-phd-aa} in Corollary \ref{corollary:Bernoulli} for each clustered/associated group of BCs, and the final number $n_{f}$ of fused BCs depends on the BC merging/clustering procedure, usually
\begin{equation}\label{eq:MB-n}
  \max\{n_i\}_{i \in {\mathcal{I}}} \leq n_{f} \leq  \sum_{i \in {\mathcal{I}}}n_i
\end{equation}
where the left equation holds iff all BCs from the other sensors are associated with those of a local sensor and the right equation holds iff no BCs are associated with \revisNew{the others}.

\begin{figure}
  \centering
  \revisNew{
  \includegraphics[width=6.5cm]{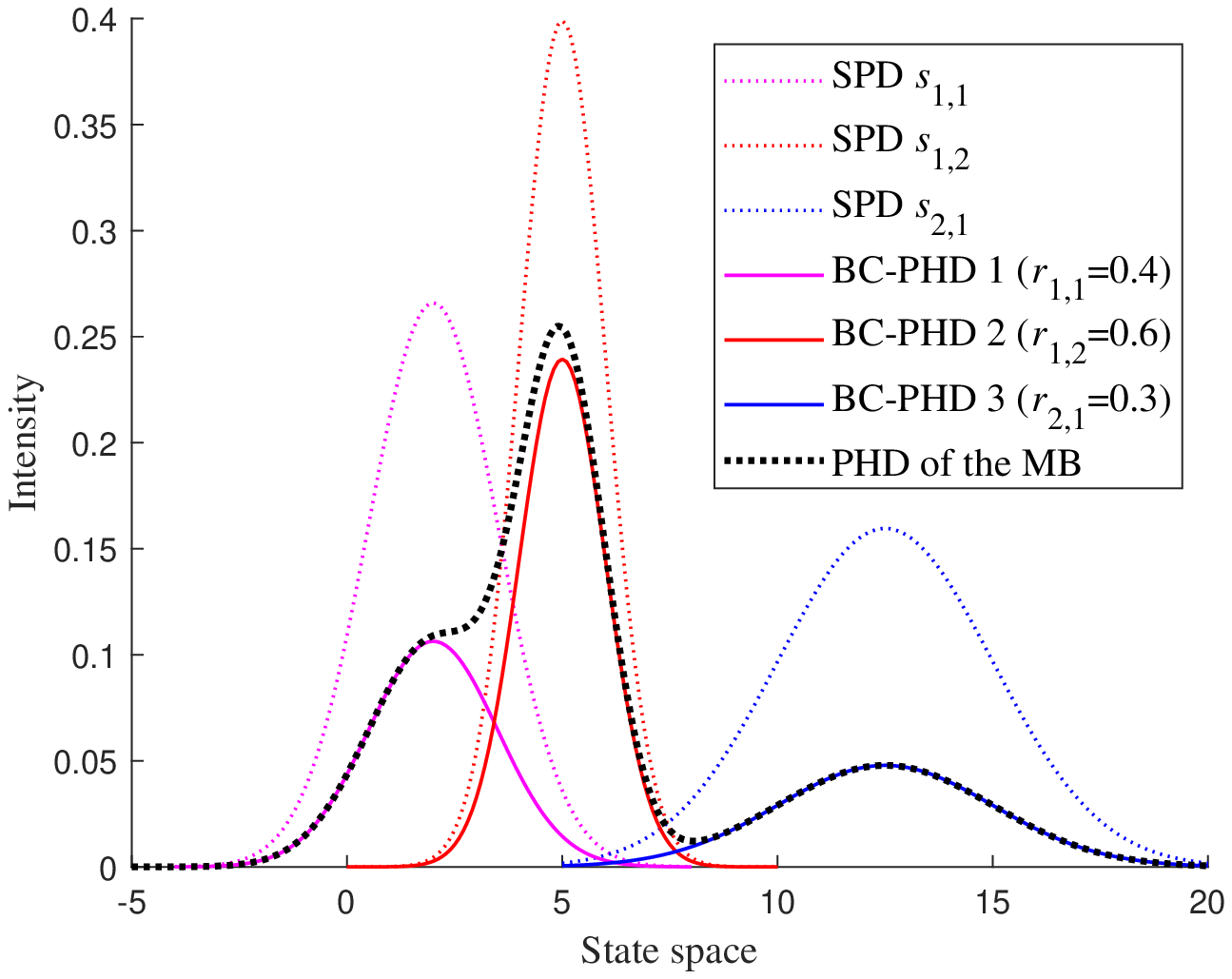}\\
  \caption{An MB density consisting of two BCs that are represented by three SPDs with different existing probabilities. }\label{fig:PHDofMB}
  \vspace{-3mm}
  }
\end{figure}

The readers are referred to \cite{Li20AAmb} for the detail of grouping MBs via clustering or pairwise association \cite{Li20AAmb}. \revisNew{For this purpose}, data-driven clustering and Bernoulli merging \cite{Fontana22merging} may be useful. The above MB-AA fusion has been used for PMB-AA-fusion \cite{GLi22aaPMB}. In the case of labeled MB, the BC association amounts to the label matching \revisNew{as to be addressed in section \ref{sec:label}}. 

\subsection{MBM PHD-AA Fusion} \label{sec:MBM-AA}
Since the MBM is a mixture that admits closed-form AA fusion \cite{Li23AApmbm}, the AA fusion is the perfect match to them.
Consider the MBM RFS which has $\mathbb{J}_i$ MBs, each having $n_{i,j}$ BCs in which the $l$-th BC has existence probability $r_{i,j,l}$ and SPD $s_{i,j,l}\left(\mathbf{x}\right)$. The corresponding PHD-AA for the MBMs can be expressed as, c.f., \eqref{eq:def_MBM_PHD},
\begin{align}
	D_{\text{AA}}^{\text {mbm}}(\mathbf{x}) & =  \sum\limits_{i \in {\mathcal{I}}}{w_i D_i^{\text {mbm}}(\mathbf{x}) } \nonumber \\
& =  \sum\limits_{i \in {\mathcal{I}}}{w_i \sum\limits_{j \in \mathbb{J}_i} \sum_{l=1}^{n_{i,j}}{r_{i,j,l}}{s_{i,j,l}}(\mathbf{x}) } \nonumber \\
& =  \sum\limits_{(i,j) \in \mathcal{I} \times \mathbb{J}_i} \sum_{l=1}^{n_{i,j}}{r'_{i,j,l}}{s_{i,j,l}(\mathbf{x})} \label{eq:MBM-PHA-AA}
\end{align}
where the re-weighted existence probability of each BC is
\begin{equation}\label{hypothesis_Union}
  	r'_{i,j,l} = w_ir_{i,j,l}
\end{equation}

It can be seen that the number of MBs in the fused MBM equals the sum of those of the fusing MBMs. To combat with the increase of the MBM size, it is not recommended to communicate and fuse the entire MBM, which is not only communicatively and computationally costly but may result in lower accuracy according to our experience. It is just sufficient to communicate and fuse only a few MBs or even a single MB \revisNew{of the greatest hypothesis weight} in each fusing MBM \cite{Li23AApmbm}. Furthermore, \revisNew{the BC association and merging should be carried out following the \textit{target-wise} fusion rule: the BCs regarding the same potential target should be merged, although the data association is usually challenging \cite{Marano22}.} Moreover, the BC merging is supposed to be carried out to those from different MBs of different sensors (as addressed in \cite[Sec.5.2]{Li23AApmbm}) while the BCs in the same MB are supposed to represent different targets and should not be fused.

%
%


\begin{remark}
\revisNew{BC association and merging are the key operations to execute the target-wise AA fusion of these BC-mixture type filters including MB, PMB, MBM, PMBM (and even the GLMB in spite of the extra labels)} and their simplified variants.
Based on proper inter-sensor track/BC association, these filters can all be fused in a similar way that facilitates parallel, closed-form Bernoulli merging. \revisNew{By this, the resulted BC can be more concentrated on its mode/mean. } 
\end{remark}


\section{Labeled PHD-AA fusion} \label{sec:LRFS-AA-fusion} \label{sec:label}
The LRFS filters such as the GLMB filters \cite{Vo13Label,Vo14GLMB} provide 
the time-series trajectory information for each target via the labels, which makes them advantageous in comparison with the unlabeled RFS filters by eliminating problems like track fragmentation \cite{Vo19msGLMB}. 
However, just because of the use of labels that contain track identification information, 
the fusion becomes more challenging as it is needed to ``\textit{coordinate/match}'' different labels/tracks among the fusing LRFS filters.
\revisNew{
\begin{definition}[Label matching] The labels assigned to the BCs that represent the same target should be matched with each other: 
Only the matched components be fused and be relabeled the same after the fusion. Two components are deemed irrelevant and should not be involved in the fusion if they are unmatched.
\end{definition}}

This \textit{label-wise} fusion can be viewed as a strict implementation of the \textit{target-wise fusion} rule, which is the prerequisite for the LRFS fusion.
\revisNew{Unlike in the unlabeled case where 
the fusing components that correspond to the same target may not be merged if they are distant, in the case of LRFS fusion all of the labeled components corresponding to the same target must be matched and relabeled using the same, unique label. This is simply because a single target should not have multiple labels due to the {label uniqueness}. 
By label matching, it simply means that one can be converted to another through a bijective mapping function.} 
The problem of \revisNew{the so-called} label inconsistency and matching has been addressed earlier \cite{SLi18LableInconsistenceFreeFusion,Sli19GCI-matching-GLMB,Gao20GLMB}, which remains an open issue. \revisNew{In this work, we assume that labels can be properly matched across the sensors and leave how to match labels to the future work}.


%

\subsection{Divergence between GLMBs/LPHDs} \label{sec:LRFS-label}

The LRFS integral is defined as \cite{Vo13Label,Vo14GLMB,Papi15}
\begin{align}
 \int_{\mathbb{X} \times \mathbb{L}} \pi\big(\widetilde{X}\big) \delta \widetilde{X} 
 & =  \sum\limits_{n = 0}^\infty \sum_{L_n \subseteq  \mathcal{L}^n} \frac{1}{{n!}} \int_{\mathcal{X}^n} \pi\big(\widetilde{X}_n\big) d \mathbf{x}_1 ,\dots,d \mathbf{x}_n \nonumber
\end{align}

{In the above formulation, the labels are random, discrete variables. However, in the context of labeled MPD fusion, the labels of any \revisNew{realization of the} labeled MPD are deterministic and finite, which need to be matched between the fusing labeled filters to abide with the label-wise fusion rule.} 
\revisNew{The existing definition of the KL divergence of two LRFS densities with the same given} label set $L_n=\{l_1,l_2,\dots,l_n\}$ 
is
\begin{align}
 D_\text{KL}& \big( \pi_1^{L_n} \| \pi_2^{L_n}  \big) = \int_{\mathcal{X}^n}  {\pi_1\big(\widetilde{X}_n\big)\log \frac{\pi_1\big(\widetilde{X}_n\big)}{\pi_2\big(\widetilde{X}_n\big)} \delta X_n} \label{eq:def_LRFS-kld}  
\end{align}

\begin{remark}
The above definition is restrictive as it holds only for two labeled MPDs \revisNew{realized with the same number of labels}, where 
the inter-filter labels are matched and treated as the same. It takes into account only the divergence in the state distribution and \revisNew{ignores the difference between the matched labels. This amounts to completely ignoring the label estimate error of each filter}. In principle, the labels which unavoidably bear estimate error are variant among the fusing filters and their difference should be accounted for in the divergence for labeled MPDs \revisNew{otherwise the result as calculated by \eqref{eq:def_LRFS-kld} is incomplete and problematic.}
\footnote{Considering two labeled MPDs of the same state distribution but assigned with label $(1; 1)$ and $(2; 1)$, respectively. Here, in the label notation $l = (k; \kappa)$, $k$ denotes the estimated time of birth, and $\kappa$ is a unique index to distinguish new targets born at the same time \cite{Vo13Label}. That is, both filters detect a target and produce the same state distribution estimate but different born-times of the track.  By matching these two labels, the divergence calculated by \eqref{eq:def_LRFS-kld} will be zero which means erroneously that two labeled MPDs are identical. 
If these two labels are not matched, \eqref{eq:def_LRFS-kld} will yield \revisNew{no meaningful result.} 
The result is unconvincing in either case.
} 
%
\end{remark}

\begin{figure}
  \centering
  \includegraphics[width=8.5cm]{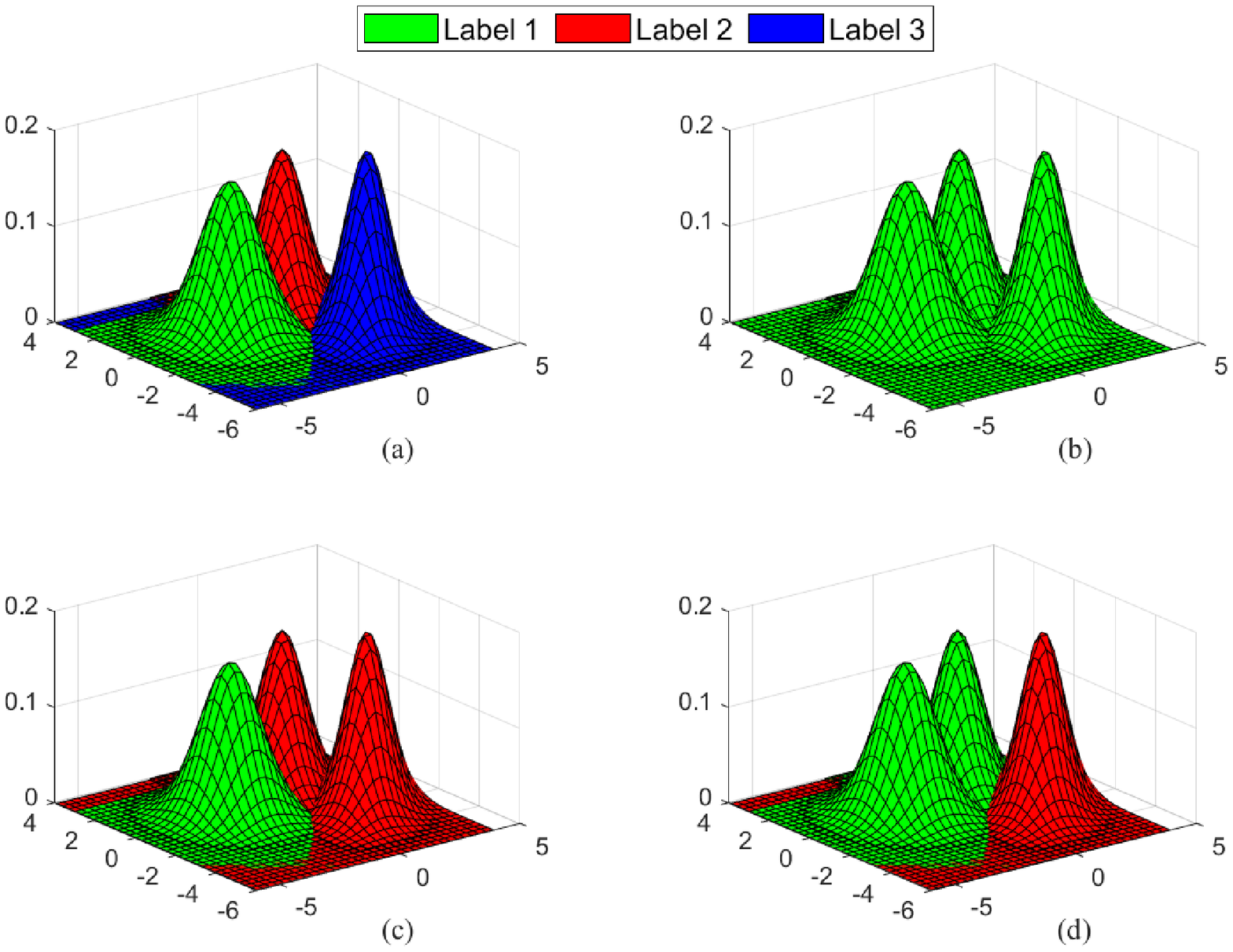}\\  
  \caption{Four LMBs of the same state distribution but different labels: \revisNew{The so-defined KL divergence \eqref{eq:def_LRFS-kld} applies only between the two LRFS distributions (c) and (d) that have the same number of labels, not the other cases that have different numbers of labels. 
  }  
   } \label{fig:LMB-4}
  \vspace{-2mm}
\end{figure}

This can be illustrated in Fig. \ref{fig:LMB-4} where four LMBs have the same unlabeled state distribution but different labels and different potential target-state-estimates. 
\revisNew{The divergence \eqref{eq:def_LRFS-kld} applies only between the LRFS distributions of the same number of labels and different label matching choices may result in different results. However, it remains undefined how to calculate the divergence between any other two labeled distributions that have different numbers of labels such as the one given in (a) and (b)/(c)/(d) in Fig. \ref{fig:LMB-4}. This leads to a theoretical barrier to derive the MPD-AA fusion based on the BFoM \eqref{eq:RFS-AA-Whole-KLD}.}
This problem, however, does not exist with the LPHD \eqref{eq:def-LRFS-phd-label} that is rigorously defined on single-target single-label $\mathcal{X} \times \mathcal{L}$, no matter how many labels each filter has. 

We therefore extend the definition of the PHD consistency to the labeled domain and \revisNew{assume the LPHDs to be fused are properly label-matched (and so the matched components will be relabeled the same after fusion). That is, label matching is the prerequisite of the state density fusion.
We reiterate that} the LPHD-AA fusion is a necessary but not sufficient condition for the MPD-AA fusion, namely, 
\begin{align}
  {\widetilde D}_{\text {MPD-AA}} (\mathbf{x},l) 
  & = \int_{\mathbb{X} \times \mathbb{L}}  {\bigg(\sum_{\mathbf{w}\in {\widetilde X}}{\delta_\mathbf{w}}\big[(\mathbf{x},l)\big] \bigg)\pi_{\text {AA}}({\widetilde X})\delta {\widetilde X}}  \nonumber \\
  & = \sum_{i \in  \mathcal{I}}  {w_i}\int_{\mathbb{X} \times \mathbb{L}}   {\bigg(\sum_{\mathbf{w}\in {\widetilde X}}{\delta_\mathbf{w}}\big[(\mathbf{x},l)\big] \bigg)\pi_i({\widetilde X})\delta {\widetilde X}}  \nonumber \\
  & = \sum_{i \in \mathcal{I}} {w_i {\widetilde D}_i(\mathbf{x},l)}
  \nonumber \\
  & \triangleq {\widetilde D}_{\text {AA}}(\mathbf{x},l) \label{eq:GLMB-aa-phd-label}
\end{align}
which has the following BFoM property, c.f., \eqref{eq:PHD-AA-KLD}
\begin{align}
 {\widetilde D}_{\text{AA}}({\mathbf{x}},l) & = \operatorname*{arg\,min}_{g\in \mathcal{F}_{\mathcal{X}\mathcal{L}}} \sum\limits_{i \in {\mathcal{I}}} {w_i D_\text{KL}\big({\widetilde D}_i \|g\big)} \label{eq:LPHD-AA-KLD}
\end{align}
where 
the scalar-valued function set $\mathcal{F}_{\mathcal{X}\mathcal{L}} = \{f: \mathcal{X} \times \mathcal{L} \rightarrow \mathbb{R} \}$, and the KL divergence \revisNew{with respect to two LPHDs ${\widetilde D}_{1}({\mathbf{x}},l),{\widetilde D}_{2}({\mathbf{x}},l)$ of the same label $l \in \mathcal{L}$ is defined as}
\begin{align}
 D_\text{KL} \big( {\widetilde D}_1 \| {\widetilde D}_2 \big) = 
 \int_{\mathcal{X}}  {{\widetilde D}_1 ({\mathbf{x}},l) \log \frac{{\widetilde D}_1({\mathbf{x}},l)}{{\widetilde D}_2({\mathbf{x}},l)} d \mathbf{x}} \label{eq:def_LPHD-kld}  
\end{align}  %

\revisNew{We hereafter start from averaging the state distribution of the LPHDs with matched label as shown in \eqref{eq:GLMB-aa-phd-label} but do not calculate  \eqref{eq:LPHD-AA-KLD} or  \eqref{eq:def_LPHD-kld} which overlooks the label estimate and matching error. It is the LPHD consistency not the BFoM that forms the motivation of our derivation. }

\subsection{Index-matched GLMB LPHD-AA Fusion} \label{sec:label-wise-fusion-base}
\begin{lemma} \label{lemma:glmb-lphd-aa}
\revisNew{For a group of GLMBs $\pi_i^{\text {gl}}\big(\widetilde{X}\big), i \in {\mathcal{I}}$ with matched index $c \in \mathbb{C}$, defined as follows, c.f., \eqref{eq:glmb-in-general}
\begin{equation}
	\pi_i^{\text {gl}}\big(\widetilde{X}\big) =  \Delta \big(\widetilde{X}\big) \sum\limits_{c \in \mathbb{C}} {{\omega_i^{(c)}}\Big(\mathcal{L}\big(\widetilde{X}\big)\Big)\prod\limits_{(\mathbf{x},l) \in \widetilde{X}} {{s_i^{(c)}}(\mathbf{x},l)} }
\end{equation}
the corresponding LPHD-AA fusion using fusion weights $\mathbf{w} \in \mathbb{W}$ will result in the following fused GLMB
\begin{align}
	& \pi_\text{LPHD-AA}^{\text {gl}}\big(\widetilde{X}\big) =  \nonumber \\
& \Delta \big(\widetilde{X}\big) \sum\limits_{c \in \mathbb{C}} {{\omega_\text{LPHD-AA}^{(c)}}\Big(\mathcal{L}\big(\widetilde{X}\big)\Big)\prod\limits_{(\mathbf{x},l) \in \widetilde{X}} {{s_\text{LPHD-AA}^{(c)}}(\mathbf{x},l)} }
\end{align}
where}
\begin{align}
 {\omega_\text{LPHD-AA}^{(c)}}(L) & =  \sum\limits_{i \in {\mathcal{I}}} {{w_i}{\omega_i^{(c)}}(L)} \label{eq:def-glmb-lphd-aa-omega} \\
  s_\text{LPHD-AA}^{(c)}(\mathbf{x},l) & = \frac{1}{ {\omega_\text{LPHD-AA}^{(c)}}(L) } \sum_{i \in \mathcal{I}} w_i {\omega_i^{(c)}(L)} {s_i^{(c)}}(\mathbf{x},l) \label{eq:def-glmb-lphd-aa-s}
\end{align}
\end{lemma}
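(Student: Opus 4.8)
The plan is to prove this by direct computation of the LPHD of the proposed fused GLMB and checking it coincides with the LPHD-AA target $\widetilde D_\text{AA}(\mathbf{x},l)$ from \eqref{eq:GLMB-aa-phd-label}, thereby confirming that the proposed GLMB is precisely the (index-matched) GLMB that realizes LPHD consistency. First I would recall the LPHD formula for a general GLMB, namely \eqref{eq:glamb-Lphd}, which expresses $\widetilde D^{\text{gl}}(\mathbf{x},l)$ as $\sum_{c\in\mathbb{C}} s^{(c)}(\mathbf{x},l)\sum_{L\subseteq\mathbb{L}}\mathrm{1}_L(l)\,\omega^{(c)}(L)$. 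Applying this to each fusing GLMB $\pi_i^{\text{gl}}$ gives $\widetilde D_i(\mathbf{x},l) = \sum_{c\in\mathbb{C}} s_i^{(c)}(\mathbf{x},l)\sum_{L\subseteq\mathbb{L}}\mathrm{1}_L(l)\,\omega_i^{(c)}(L)$, and then \eqref{eq:GLMB-aa-phd-label} yields
\begin{equation}
\widetilde D_\text{AA}(\mathbf{x},l) = \sum_{i\in\mathcal{I}} w_i \sum_{c\in\mathbb{C}} s_i^{(c)}(\mathbf{x},l)\sum_{L\subseteq\mathbb{L}}\mathrm{1}_L(l)\,\omega_i^{(c)}(L). \nonumber
\end{equation}

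Next I would compute the LPHD of the \emph{proposed} fused GLMB $\pi_\text{LPHD-AA}^{\text{gl}}$ by substituting the definitions \eqref{eq:def-glmb-lphd-aa-omega} and \eqref{eq:def-glmb-lphd-aa-s} into \eqref{eq:glamb-Lphd}. The key algebraic point is that in the product $\omega_\text{LPHD-AA}^{(c)}(L)\, s_\text{LPHD-AA}^{(c)}(\mathbf{x},l)$ the normalizing factor $1/\omega_\text{LPHD-AA}^{(c)}(L)$ cancels, leaving $\sum_{i\in\mathcal{I}} w_i\,\omega_i^{(c)}(L)\,s_i^{(c)}(\mathbf{x},l)$; summing over $c$ and over $L$ with the indicator $\mathrm{1}_L(l)$ then reproduces exactly the expression for $\widetilde D_\text{AA}(\mathbf{x},l)$ displayed above, after interchanging the (finite) sums. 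This establishes LPHD consistency. I would also verify the side conditions that make $\pi_\text{LPHD-AA}^{\text{gl}}$ a bona fide GLMB: that $\{\omega_\text{LPHD-AA}^{(c)}(L)\}$ sums to one over $L\subseteq\mathbb{L}$ and $c\in\mathbb{C}$ (immediate from $\mathbf{w}^\mathrm{T}\mathbf{1}_I=1$ and the fact that each $\{\omega_i^{(c)}\}$ sums to one), and that each $s_\text{LPHD-AA}^{(c)}(\mathbf{x},l)$ integrates to one in $\mathbf{x}$ (immediate since it is a convex combination, with weights $w_i\omega_i^{(c)}(L)/\omega_\text{LPHD-AA}^{(c)}(L)$, of the unit-integral densities $s_i^{(c)}(\cdot,l)$).

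A complementary uniqueness-style remark would round out the argument: since the index set $\mathbb{C}$ and the label arguments are held fixed (index-matched) across sensors, requiring the fused object to again be a GLMB over the same $\mathbb{C}$ and to be LPHD-consistent forces the hypothesis weights to combine additively per $(c,L)$ and the labeled SPDs to combine as the corresponding weight-normalized convex combination; this is the labeled counterpart of the reasoning used in Corollary \ref{corollary:Bernoulli} (``the only BC that ensures PHD-AA fusion is\dots'').

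I do not anticipate a deep obstacle here; the main care points are bookkeeping ones. The trickiest step is handling the interchange of the sums over $i$, $c$, and $L$ together with the Dirac-delta/indicator manipulations in \eqref{eq:def-LRFS-phd-label}--\eqref{eq:glamb-Lphd} cleanly, and making sure the degenerate case $\omega_\text{LPHD-AA}^{(c)}(L)=0$ (where $s_\text{LPHD-AA}^{(c)}(\mathbf{x},l)$ is then irrelevant, as its coefficient vanishes) does not cause a spurious division-by-zero in \eqref{eq:def-glmb-lphd-aa-s}; this is a standard convention to state explicitly.
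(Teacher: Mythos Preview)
Your proposal is correct and follows essentially the same approach as the paper. Both arguments compute the LPHD of each fusing GLMB via \eqref{eq:glamb-Lphd}, form the arithmetic average \eqref{eq:GLMB-aa-phd-label}, and identify the parameters of the fused GLMB that reproduce this averaged LPHD; the only stylistic difference is that the paper proceeds as a \emph{derivation} (integrating over $\mathbf{x}$ to isolate $\omega_\text{LPHD-AA}^{(c)}(L)$ first, then matching per $(c,L)$ to obtain $s_\text{LPHD-AA}^{(c)}$), whereas you proceed as a \emph{verification} (plugging in \eqref{eq:def-glmb-lphd-aa-omega}--\eqref{eq:def-glmb-lphd-aa-s} and observing the cancellation). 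Your additional checks---that the fused weights and SPDs are properly normalized, and the convention for the degenerate case $\omega_\text{LPHD-AA}^{(c)}(L)=0$---are worthwhile clarifications that the paper leaves implicit.
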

\begin{proof}
\revisNew{We first calculate the LPHD ${{\widetilde D}^{\text {gl}}_i}(\mathbf{x},l)$ of each GLMB, obtaining $\forall i \in {\mathcal{I}}$
\begin{equation}
  {\widetilde D}_i^{\text {gl}}(\mathbf{x},l)  = \sum\limits_{c \in \mathbb{C}}{{s_i^{(c)}}(\mathbf{x},l)}\sum\limits_{L \subseteq \mathbb{L}} {{\mathrm{1}_L}(l) {\omega_i^{(c)}}(L)}
\end{equation}
Given the matched index $c \in \mathbb{C}$, their AA using fusion weights $\mathbf{w} \in \mathbb{W}$ is given by
\begin{align}
{\widetilde D}^{\text {gl}}_\text{AA} (\mathbf{x},l) & = \sum_{i \in \mathcal{I}} {w_i {\widetilde D}^{\text {gl}}_i (\mathbf{x},l)} \label{eq:glmb-Lphd-aa} \\
& = \sum\limits_{c \in \mathbb{C}}{{s_\text{LPHD-AA}^{(c)}}(\mathbf{x},l)}\sum\limits_{L \subseteq \mathbb{L}} {{\mathrm{1}_L}(l) {\omega_\text{LPHD-AA}^{(c)}}(L)} \label{eq:def-glmb-lphd-aa}
\end{align}}
Calculating the state integral of both sides of \eqref{eq:glmb-Lphd-aa} leads to
$ \int_{\mathcal{X}} { \sum\limits_{c \in \mathbb{C}} {{s_\text{LPHD-AA}^{(c)}}(\mathbf{x},l)} } \sum\limits_{L \subseteq \mathbb{L}}{\mathrm{1}_L}(l){ \omega_\text{LPHD-AA}^{(c)}(L)} d \mathbf{x} 
   = \int_{\mathcal{X}} \sum_{i \in \mathcal{I}} w_i {\sum\limits_{c \in \mathbb{C}}{s_i^{(c)}}(\mathbf{x},l)} \sum\limits_{L \subseteq \mathbb{L}}{\mathrm{1}_L}(l) { {\omega_i^{(c)}(L)} } d \mathbf{x}$.
That is,
\begin{align} 
 \sum\limits_{L \subseteq \mathbb{L}} {\mathrm{1}_L}(l) \sum\limits_{c \in \mathbb{C}} { \omega_\text{LPHD-AA}^{(c)}(L)} = \sum\limits_{L \subseteq \mathbb{L}}{\mathrm{1}_L}(l)  \sum\limits_{c \in \mathbb{C}} \sum_{i \in \mathcal{I}} w_i  { \omega_i^{(c)}(L)} \nonumber
\end{align}

Given that the index $c \in \mathbb{C}$, label set $L \subseteq \mathbb{L}$ are matched among sensors, respectively, one can easily get \eqref{eq:def-glmb-lphd-aa-omega} for $L: {\mathrm{1}_L}(l)=1$.
Meanwhile, \eqref{eq:glmb-Lphd-aa} can be expressed as follows
\begin{align}
  & \sum\limits_{L \subseteq \mathbb{L}} { \sum\limits_{c \in \mathbb{C}} {{\mathrm{1}_L}(l) \omega_\text{LPHD-AA}^{(c)}(L)} {{s_\text{LPHD-AA}^{(c)}}(\mathbf{x},l)} } \nonumber \\
  & = \sum_{i \in \mathcal{I}} w_i \sum\limits_{L \subseteq \mathbb{L}} { \sum\limits_{c \in \mathbb{C}} {{\mathrm{1}_L}(l) \omega_i^{(c)}(L)} {{s_i^{(c)}}(\mathbf{x},l)} } \label{eq:full-GLMB-LPHD-aa}
\end{align}
which \revisNew{leads to \eqref{eq:def-glmb-lphd-aa-s} for} each matched index $c$ and label $l$. 
\end{proof}

\revisNew{\begin{remark}
The fusion of the GLMBs in Lemma \ref{lemma:glmb-lphd-aa} 
is the LPHD-AA consistency as given in \eqref{eq:glmb-Lphd-aa}, not the LPHD-best-fit like \eqref{eq:LPHD-AA-KLD}. As addressed, the latter holds only in the case when the fusing LPHDs have the same label since the divergence \eqref{eq:def_LPHD-kld} 
is undefined for the LPHDs of different labels. 
In contrast, the LPHD-AA \eqref{eq:glmb-Lphd-aa} can still be calculated for LPHDs realized with actually different labels, ${\widetilde D}^{\text {gl}}_1 (\mathbf{x},l_i), i \in \mathcal{I}$, i.e.,
\begin{equation}\label{eq:LPHD-AA-diff-labels}
  {\widetilde D}^{\text {gl}}_\text{AA} (\mathbf{x},l_\text{f}) \triangleq \sum_{i \in \mathcal{I}} {w_i {\widetilde D}^{\text {gl}}_i (\mathbf{x},l_i)}
\end{equation}
where $l_\text{f}$ denotes the re-created label to relabel all components involved in the right side after fusion.
\end{remark}}

\subsection{$\delta$-GLMB LPHD-AA Fusion with Matched Label Set and Track Association Hypotheses}
Given that the labels $l\in \mathcal{L}$, as well as the relevant history data association hypothesis pair $(L,\xi) \subseteq \mathbb{L} \times \Xi$, are matched across the sensors, 
we have the following result



\begin{corollary} \label{corollary:delta-GLMB-AA}
For \revisNew{a group of} $\delta$-GLMBs $\pi_\text{AA}^{{\delta}}\big(\widetilde{X}\big), i\in \mathcal{I}$ with matched labels $l\in \mathcal{L}$ and track association hypotheses $(L,\xi) \subseteq \mathbb{L} \times \Xi$, \revisNew{defined as follows,}
$$\pi_i^{{\delta}}\big(\widetilde{X}\big) = \Delta \big(\widetilde{X}\big) \sum\limits_{L \subseteq \mathbb{L}} {\delta _L}\big[\mathcal{L}\big(\widetilde{X}\big)\big]  \sum\limits_{\xi \in \Xi} {{\omega_i^{(L,\xi )}}\prod\limits_{(\mathbf{x},l) \in \widetilde{X}} {{s_i^{(\xi)}}(\mathbf{x},l)}}$$ 
the corresponding LPHD-AA fusion using fusion weights $\mathbf{w} \in \mathbb{W}$ will result in the \revisNew{following} fused $\delta$-GLMB
\begin{align}
  & \pi_\text{LPHD-AA}^{{\delta}}\big(\widetilde{X}\big) = \nonumber \\
  & \Delta \big(\widetilde{X}\big) \sum\limits_{L \subseteq \mathbb{L}} {\delta _L}\big[\mathcal{L}\big(\widetilde{X}\big)\big]  \sum\limits_{\xi \in \Xi} {{\omega_\text{LPHD-AA}^{(L,\xi )}}\prod\limits_{(\mathbf{x},l) \in \widetilde{X}} {{s_\text{LPHD-AA}^{(\xi)}}(\mathbf{x},l)}}\label{eq:delta-GLMB-AA-in-full}
\end{align}
where 
 \begin{align}
  {\omega_\text{LPHD-AA}^{(L,\xi )}} & =  \sum_{i \in \mathcal{I}} w_i  {\omega_i^{(L,\xi )}} \label{eq:omega-AA-hypothesis-matching} \\
   {{s_\text{LPHD-AA}^{(\xi)}}(\mathbf{x},l)} & = \frac{1}{\sum_{i \in \mathcal{I}} w_i {\omega_i^{(L,\xi )}}} \sum_{i \in \mathcal{I}} w_i {\omega_i^{(L,\xi )}} {s_i^{(\xi)}}(\mathbf{x},l) \label{eq:s-AA-hypothesis-matching}
 \end{align}
\end{corollary}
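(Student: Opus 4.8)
The plan is to recognize that a $\delta$-GLMB is just a GLMB with the specialized index structure $\mathbb{C} = \mathbb{L} \times \Xi$, $\omega^{(c)}(L') = \omega^{(L,\xi)}\delta_L(L')$, and $s^{(c)}(\mathbf{x},l) = s^{(\xi)}(\mathbf{x},l)$, and then simply \emph{specialize} Lemma~\ref{lemma:glmb-lphd-aa} to this case. So first I would substitute $c = (L,\xi)$ into the generic fused-weight formula \eqref{eq:def-glmb-lphd-aa-omega}. Since each sensor's weight carries the same Kronecker factor, $\omega_i^{(L,\xi)}(L') = \omega_i^{(L,\xi)}\delta_L(L')$, the AA combination $\omega_\text{LPHD-AA}^{(c)}(L') = \sum_{i} w_i \omega_i^{(L,\xi)}\delta_L(L')$ retains the same factored form, whose coefficient is exactly \eqref{eq:omega-AA-hypothesis-matching}. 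This is essentially a bookkeeping step; the only thing to check is that the matched-index assumption of Lemma~\ref{lemma:glmb-lphd-aa} is compatible with the matched-$(L,\xi)$ assumption stated in the corollary, which it is by construction.

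Next I would specialize the fused SPD formula \eqref{eq:def-glmb-lphd-aa-s}. Here the subtlety is that in the $\delta$-GLMB the spatial density depends only on $\xi$ (and $l$), not on the full pair $c=(L,\xi)$; but when we evaluate \eqref{eq:def-glmb-lphd-aa-s} at a fixed $c = (L,\xi)$ the factor $\omega_i^{(c)}(L)$ becomes $\omega_i^{(L,\xi)}$ (the Kronecker delta $\delta_L(L)=1$ picks out precisely the surviving term), so the weighted average over sensors collapses to $s_\text{LPHD-AA}^{(\xi)}(\mathbf{x},l) = \big(\sum_i w_i \omega_i^{(L,\xi)}\big)^{-1}\sum_i w_i \omega_i^{(L,\xi)} s_i^{(\xi)}(\mathbf{x},l)$, which is \eqref{eq:s-AA-hypothesis-matching}. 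I would remark that the resulting fused spatial density depends on $(L,\xi)$ a priori through the weights but, because the matched track set $L$ is determined by $\xi$ in the $\delta$-GLMB history, this is consistent with writing it as $s_\text{LPHD-AA}^{(\xi)}$; alternatively one can simply take the corollary's hypotheses to guarantee a single matched $L$ per $\xi$.

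The last step is to reassemble the fused MPD: plugging the factored weights $\omega_\text{LPHD-AA}^{(L,\xi)}\delta_L[\mathcal{L}(\widetilde X)]$ and the densities $s_\text{LPHD-AA}^{(\xi)}$ back into the generic fused GLMB of Lemma~\ref{lemma:glmb-lphd-aa} and observing that the result has exactly the $\delta$-GLMB form \eqref{eq:delta-GLMB-AA-in-full}. Because $\sum_i w_i = 1$ and each $\omega_i$-family together with $s_i^{(\xi)}$ satisfies the GLMB normalization constraints, the fused family does too, so the output is a bona fide $\delta$-GLMB. I would also note briefly (as is done for Lemma~\ref{lemma:glmb-lphd-aa}) that the derivation proceeds from LPHD-AA consistency \eqref{eq:GLMB-aa-phd-label}, not from any best-fit/divergence minimization, so no approximation is incurred.

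\textbf{Main obstacle.} There is no deep obstacle: the proof is a direct specialization. The only point requiring a little care is the indexing bookkeeping — making sure that when the generic index $c$ is replaced by $(L,\xi)$ the Kronecker factor $\delta_L(\cdot)$ is tracked correctly through \eqref{eq:def-glmb-lphd-aa-omega}--\eqref{eq:def-glmb-lphd-aa-s}, and being explicit about the (mild) assumption that each association history $\xi$ is paired with a single matched label set $L$ so that the fused spatial density can legitimately be written as a function of $\xi$ (and $l$) alone. Everything else is substitution and the normalization check.
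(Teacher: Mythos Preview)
Your proposal is correct and follows essentially the same approach as the paper: the paper's proof simply observes that a $\delta$-GLMB is a special GLMB with $\mathbb{C}=\mathbb{L}\times\Xi$, $\omega^{(c)}(L')=\omega^{(L,\xi)}\delta_L(L')$, $s^{(c)}(\mathbf{x},l)=s^{(\xi)}(\mathbf{x},l)$, and then invokes Lemma~\ref{lemma:glmb-lphd-aa}. Your write-up is more detailed (explicitly tracking the Kronecker factor through \eqref{eq:def-glmb-lphd-aa-omega}--\eqref{eq:def-glmb-lphd-aa-s} and checking normalization), but the substance is identical.
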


\begin{proof}
The proof can be similarly done as that for Lemma \ref{lemma:glmb-lphd-aa} in view of the fact that a $\delta$-GLMB RFS is a special GLMB RFS with $\mathbb{C} =  \mathbb{L} \times \Xi$, ${\omega^{(c)}}(L') = {\omega^{(L,\xi )}}{\delta _L}(L')$, and ${s^{(c)}}(\mathbf{x},l) = {s^{(\xi )}}(\mathbf{x},l)$.
\end{proof}

\subsection{Simplified GLMB LPHD-AA Fusion}
Two computationally cheaper, approximate alternatives to the $\delta$-GLMB filter \cite{Vo13Label,Vo14GLMB} 
are given by the M-GLMB \cite{Fantacci15M-glmb} and the LMB \cite{Reuter14LMB}. The former preserves both the LPHD and cardinality distribution while the latter only matches the unlabeled PHD but not the cardinality distribution \cite{Fantacci18GCI-lmb-Mglmb}.

In short, the M-GLMB filter \cite{Fantacci15M-glmb} marginalizes the history measurement-track association hypothesis $\xi$ over $\Xi$ of the $\delta$-GLMB, namely,
\begin{equation}\label{eq:marg-M-GLMB}
  \omega^{(L)} = \sum\limits_{\xi \in \Xi} {\omega^{(L,\xi )}}
\end{equation}
which leads to the following MPD and LPHD
\begin{align}
	\pi^{\text{m-}\delta}\big(\widetilde{X}\big) &= \Delta \big(\widetilde{X}\big) \sum\limits_{L \subseteq \mathbb{L}} {\delta _L}\big[\mathcal{L}\big(\widetilde{X}\big)\big]  {{\omega^{(L)}}\prod\limits_{(\mathbf{x},l) \in \widetilde{X}} {{s^{(L)}}(\mathbf{x},l)} } \label{eq:M-GLMB} \\
{\widetilde D}^{\text{m-}\delta}(\mathbf{x},l) &= \sum\limits_{L \subseteq \mathbb{L}} {\mathrm{1}_L}(l)  {{\omega ^{(L)}} { {s^{(L)}}(\mathbf{x},l)}} \label{eq:M-GLMB-phd}
\end{align}

The LMB filter \cite{Reuter14LMB} approximates the GLMB using a single LMB distribution, namely $\mathbb{C}$ has only a single element in \eqref{eq:glmb-in-general}. 
Then, each labeled BC with label $l \in L \subseteq \mathbb{L}$ is associated with a SPD ${s}(\mathbf{x},l)$ and an existence probability $r^{(l)}$, given by
 \begin{align}
   r^{(l)} 
   & =  \sum\limits_{L \subseteq \mathbb{L}} {\mathrm{1}_L}(l) \sum\limits_{\xi \in \Xi} {\omega^{(L,\xi )}}\label{eq:LMB-l-EP} \\
   {s}(\mathbf{x},l) & = \sum\limits_{L \subseteq \mathbb{L}} {\mathrm{1}_L}(l) \sum\limits_{\xi \in \Xi} {\omega^{(L,\xi )}}{s^{(\xi)}}(\mathbf{x},l) \label{eq:LMB-l-s}
 \end{align}
The corresponding MPD and LPHD are given by
\begin{align}
	\pi^{\text {lmb}}(\widetilde{X}) &= \Delta \big(\widetilde{X}\big) \omega\big[\mathcal{L}\big(\widetilde{X}\big)\big] \prod\limits_{(\mathbf{x},l) \in \widetilde{X}} {{s}(\mathbf{x},l)} \label{eq:LMB} \\
{\widetilde D}^{\text {lmb}}(\mathbf{x},l) 
& = r^{(l)} {s}(\mathbf{x},l) \label{eq:LMB-phd} 
\end{align}


Both M-GLMB-LPHD and LMB-LPHD AA fusion can be derived from Corollary \ref{corollary:delta-GLMB-AA} by carrying forward the \revisNew{corresponding simplification that they made on} the $\delta$-GLMB. 
\begin{corollary} \label{corollary:M-GLMB-AA}
For \revisNew{a group of} M-GLMBs $\pi_i^{\text{m-}\delta}\big(\widetilde{X}\big), i\in \mathcal{I}$ with matched labels $l\in \mathcal{L}$  and label set $L \subseteq \mathbb{L}$,
$$\pi_i^{\text{m-}\delta}\big(\widetilde{X}\big) = \Delta \big(\widetilde{X}\big) \sum\limits_{L \subseteq \mathbb{L}} {\delta _L}\big[\mathcal{L}\big(\widetilde{X}\big)\big]  {{\omega_i^{(L)}}\prod\limits_{(\mathbf{x},l) \in \widetilde{X}} {{s_i^{(L)}}(\mathbf{x},l)} }$$ 
the corresponding LPHD-AA fusion using fusion weights $\mathbf{w} \in \mathbb{W}$ will result in the fused M-GLMB
$\pi_\text{LPHD-AA}^{\text{m-}\delta}\big(\widetilde{X}\big) =  \Delta \big(\widetilde{X}\big) \sum\limits_{L \subseteq \mathbb{L}} {\delta _L}\big[\mathcal{L}\big(\widetilde{X}\big)\big]  {{\omega_\text{LPHD-AA}^{(L)}}\prod\limits_{(\mathbf{x},l) \in \widetilde{X}} {{s_\text{LPHD-AA}^{(L)}}(\mathbf{x},l)} }$,
where, 
 \begin{align}
  {\omega_\text{LPHD-AA}^{(L)}} & =  \sum_{i \in \mathcal{I}} w_i  {\omega_i^{(L)}} \label{eq:omega-AA-hypothesis-matching-M-GLMB} \\
   {{s_\text{LPHD-AA}^{(\xi)}}(\mathbf{x},l)} & = \frac{1}{\omega_\text{LPHD-AA}^{(L)}} \sum_{i \in \mathcal{I}} w_i {\omega_i^{(L)}} {s_i^{(L)}}(\mathbf{x},l) \label{eq:s-AA-hypothesis-matching-M-GLMB}
 \end{align}
\end{corollary}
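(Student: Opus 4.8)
The plan is to obtain Corollary \ref{corollary:M-GLMB-AA} by specializing the $\delta$-GLMB result in Corollary \ref{corollary:delta-GLMB-AA} exactly as the M-GLMB filter specializes the $\delta$-GLMB, i.e.\ through the marginalization \eqref{eq:marg-M-GLMB}. First I would recall that, by \eqref{eq:marg-M-GLMB}, the M-GLMB is the GLMB of \eqref{eq:glmb-in-general} with index set $\mathbb{C} = \mathbb{L}$, hypothesis weight $\omega^{(c)}(L') = \omega^{(L)}\delta_L(L')$ and labeled SPD $s^{(c)}(\mathbf{x},l) = s^{(L)}(\mathbf{x},l)$; equivalently it is the $\delta$-GLMB with $\Xi$ collapsed to a single marginalized hypothesis. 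Hence the proof amounts to writing down the LPHD of each fusing M-GLMB, averaging, and reading off the parameters of the resulting M-GLMB.

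The key steps, in order: (i) compute ${\widetilde D}_i^{\text{m-}\delta}(\mathbf{x},l)$ from \eqref{eq:M-GLMB-phd} for each $i \in \mathcal{I}$; (ii) form the LPHD-AA ${\widetilde D}_\text{AA}^{\text{m-}\delta}(\mathbf{x},l) = \sum_{i\in\mathcal{I}} w_i {\widetilde D}_i^{\text{m-}\delta}(\mathbf{x},l)$ as in \eqref{eq:GLMB-aa-phd-label}, using the assumption that labels $l$ and label sets $L$ are matched across sensors so the outer sums align term by term; (iii) integrate both sides over $\mathbf{x}\in\mathcal{X}$, using $\int_{\mathcal{X}} s_i^{(L)}(\mathbf{x},l)d\mathbf{x} = 1$ and the matched-$L$ alignment, to extract the fused hypothesis weight \eqref{eq:omega-AA-hypothesis-matching-M-GLMB}; (iv) substitute this back into the averaged LPHD to solve for $s_\text{LPHD-AA}^{(L)}(\mathbf{x},l)$, obtaining the normalized weighted mixture \eqref{eq:s-AA-hypothesis-matching-M-GLMB}; (v) verify that these parameters do describe a valid M-GLMB, namely $\sum_{L\subseteq\mathbb{L}}\omega_\text{LPHD-AA}^{(L)} = 1$ (inherited from $\sum_L \omega_i^{(L)}=1$ and $\mathbf{w}^\mathrm{T}\mathbf{1}_I = 1$) and $\int_{\mathcal{X}} s_\text{LPHD-AA}^{(L)}(\mathbf{x},l)d\mathbf{x} = 1$, so that the fused density \eqref{eq:M-GLMB} is recovered with these parameters. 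Since $\Xi$ is trivial here, step (iii) is a clean direct integration rather than a sum over hypotheses, so the argument is strictly simpler than the one for Corollary \ref{corollary:delta-GLMB-AA}.

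The only real subtlety — the ``hard part,'' to the extent there is one — is bookkeeping around the matched label sets: one must be careful that the indicator ${\mathrm{1}_L}(l)$ and the sum over $L\subseteq\mathbb{L}$ are genuinely common to all fusing filters (guaranteed by the label-matching assumption and the Label-matching Definition), so that the per-$L$, per-$l$ identification of terms in step (iii)–(iv) is legitimate; otherwise one could only equate aggregated quantities. Granting that assumption, everything else is routine algebra. Consequently I would keep the written proof short, essentially the one line actually given: it follows from Corollary \ref{corollary:delta-GLMB-AA} by carrying the marginalization \eqref{eq:marg-M-GLMB} through — replace $(L,\xi)$ by the single marginalized index $L$, set $\omega_i^{(L,\xi)} \mapsto \omega_i^{(L)}$ and $s_i^{(\xi)}(\mathbf{x},l)\mapsto s_i^{(L)}(\mathbf{x},l)$ in \eqref{eq:omega-AA-hypothesis-matching}–\eqref{eq:s-AA-hypothesis-matching}, and the claimed \eqref{eq:omega-AA-hypothesis-matching-M-GLMB}–\eqref{eq:s-AA-hypothesis-matching-M-GLMB} drop out, with the M-GLMB form \eqref{eq:M-GLMB} preserved.
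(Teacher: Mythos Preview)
Your proposal is correct and follows essentially the same approach as the paper: specialize Corollary~\ref{corollary:delta-GLMB-AA} to the M-GLMB by carrying the marginalization \eqref{eq:marg-M-GLMB} through, i.e.\ replace the hypothesis index $(L,\xi)$ by the single marginalized index $L$ and substitute $\omega_i^{(L,\xi)}\mapsto\omega_i^{(L)}$, $s_i^{(\xi)}(\mathbf{x},l)\mapsto s_i^{(L)}(\mathbf{x},l)$. Your write-up is more detailed (explicit LPHD computation, integration step, and validity checks), but the underlying argument is identical to the paper's one-line proof.
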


\begin{proof}
\revisNew{The result is} from Corollary \ref{corollary:delta-GLMB-AA} by using \eqref{eq:marg-M-GLMB} and replacing ${{s_\text{LPHD-AA}^{(\xi)}}(\mathbf{x},l)}$ by ${{s_\text{LPHD-AA}^{(L)}}(\mathbf{x},l)}$. 
It has been derived from the \revisNew{constrained} MPD-best-fit in \cite{Gao20GLMB} based on a compressed expression of the $\delta$-GLMB density \cite{Papi15}. 
\end{proof}


\begin{corollary} \label{corollary:LMB-AA}
For \revisNew{a group of LMBs consisting of labeled BCs} with existence probability $r_i^{(l)}$ and SPD ${s_i}(\mathbf{x}, l)$, $i\in \mathcal{I}$, {$l\in L$}, \revisNew{where $L$ denotes the common label set of all fusing LMBs}, the label-wise LPHD-AA fusion leads to a fused LMB with AA fused existence probability $r_{\text{LPHD-AA}}^{(l)}$ and cardinalized-AA fused SPD $s_{\text{LPHD-AA}}(\mathbf{x}, l)$ given as follows, $\forall l\in L$
\begin{align}
{r_{\text{{LPHD-AA}}}^{(l)}} & =  \sum_{i \in  \mathcal{I}}  {{w_i} {r_i}^{(l)}}\label{eq:L_RFS-AA-JEP} \\
s_{\text{{LPHD-AA}}}(\mathbf{x}, l) & =\frac{1}{\revisNew{r_{\text{LPHD-AA}}^{(l)}}}{\sum_{i \in \mathcal{I}}  {{w_i}{r_i^{(l)}} {s_i}(\mathbf{x}, l)} } \label{eq:L_RFS-AA-JPD}
\end{align}
\end{corollary}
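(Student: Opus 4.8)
The plan is to derive Corollary~\ref{corollary:LMB-AA} as a direct specialization of Corollary~\ref{corollary:M-GLMB-AA} (equivalently of Corollary~\ref{corollary:delta-GLMB-AA}), exploiting the fact that the LMB is the GLMB with a singleton index set $\mathbb{C}$ and a label-wise factorized weight. First I would recall from \eqref{eq:LMB} and \eqref{eq:LMB-phd} that for an LMB the LPHD has the particularly simple product form ${\widetilde D}_i^{\text {lmb}}(\mathbf{x},l) = r_i^{(l)} s_i(\mathbf{x},l)$, which is nothing but a labeled Bernoulli PHD on the single-target single-label space $\mathcal{X}\times\mathcal{L}$ for each $l\in L$. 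Since the fusing LMBs share the common label set $L$, the LPHD-AA fusion \eqref{eq:GLMB-aa-phd-label} acts on each label slice independently, so I only need to treat a fixed $l\in L$.

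For that fixed $l$, I would write the LPHD-AA as ${\widetilde D}_\text{AA}^{\text{lmb}}(\mathbf{x},l)=\sum_{i\in\mathcal{I}} w_i\, r_i^{(l)} s_i(\mathbf{x},l)$, exactly as in \eqref{eq:GLMB-aa-phd-label}, and then demand that the fused density again be a labeled Bernoulli PHD of the form $r_\text{LPHD-AA}^{(l)} s_\text{LPHD-AA}(\mathbf{x},l)$ with $s_\text{LPHD-AA}(\cdot,l)$ a normalized SPD. Integrating both sides over $\mathbf{x}\in\mathcal{X}$ and using $\int_{\mathcal{X}} s_i(\mathbf{x},l)\,d\mathbf{x}=1$ (which holds for each labeled SPD by the GLMB normalization convention stated after \eqref{eq:glmb-in-general}) immediately yields \eqref{eq:L_RFS-AA-JEP}, i.e. $r_\text{LPHD-AA}^{(l)}=\sum_{i\in\mathcal{I}} w_i r_i^{(l)}$. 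Dividing ${\widetilde D}_\text{AA}^{\text{lmb}}(\mathbf{x},l)$ by this normalizer then gives the cardinalized-AA fused SPD \eqref{eq:L_RFS-AA-JPD}. This mirrors the Bernoulli argument of Corollary~\ref{corollary:Bernoulli} verbatim, only now carried out label-by-label on $\mathcal{X}\times\mathcal{L}$, and it equally follows from Corollary~\ref{corollary:M-GLMB-AA} by specializing to $\omega^{(L)}=\delta_{L}[\{l:\ \text{BC }l\text{ present}\}]$ factorizing as $\prod_{l\in L} r^{(l)}\prod_{l\notin L}(1-r^{(l)})$ — or more directly by noting the LMB LPHD in \eqref{eq:LMB-phd} coincides with the M-GLMB LPHD \eqref{eq:M-GLMB-phd} summed appropriately, reducing the matched-$L$ averaging to the per-label averaging above.

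The only subtlety — and what I would flag as the main point requiring care rather than a genuine obstacle — is that the LMB, unlike the M-GLMB, does \emph{not} preserve the cardinality distribution under AA fusion, because the fused existence probabilities $r_\text{LPHD-AA}^{(l)}$ are AA-combined while the cardinality distribution of an LMB is the convolution of Bernoullis with those parameters; so one must be explicit that only the (labeled) PHD is matched here, consistent with the remark preceding the subsection (\cite{Fantacci18GCI-lmb-Mglmb}). I would therefore state clearly that the fused LMB $\pi_\text{LPHD-AA}^{\text{lmb}}$ is the \emph{unique} LMB on label set $L$ whose LPHD equals ${\widetilde D}_\text{AA}^{\text{lmb}}(\mathbf{x},l)$, obtained by reassembling the fused labeled BCs $\big(r_\text{LPHD-AA}^{(l)}, s_\text{LPHD-AA}(\mathbf{x},l)\big)$ via \eqref{eq:LMB}, and that this is what "label-wise LPHD-AA fusion" means. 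All remaining manipulations are the routine normalization steps already rehearsed in Corollaries~\ref{corollary:Bernoulli} and~\ref{corollary:delta-GLMB-AA}, so the proof is short.
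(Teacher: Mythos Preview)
Your proposal is correct and follows essentially the same approach as the paper: the paper's proof simply states that the result follows from Corollary~\ref{corollary:delta-GLMB-AA} (via \eqref{eq:LMB-l-EP}--\eqref{eq:LMB-l-s}) and Lemma~\ref{lemma:glmb-lphd-aa}, and that it is analogous to the Bernoulli case in Corollary~\ref{corollary:Bernoulli} modulo the label. Your write-up is more explicit (spelling out the per-label integration and normalization steps and flagging the cardinality caveat), but the underlying derivation is the same specialization the paper invokes.
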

\begin{proof}
The result is from Corollary \ref{corollary:delta-GLMB-AA} by using \eqref{eq:LMB-l-EP} and  \eqref{eq:LMB-l-s} and also from Lemma \ref{lemma:glmb-lphd-aa}. \revisNew{It is analogous to} Corollary \ref{corollary:Bernoulli} \revisNew{regardless of the use of the label}. 
We omit the proof detail here. Notably, it is no more required any history information such as $(L,\xi)$ matched among sensors but it is still required that the labels are \revisNew{inter-filter} matched.
\end{proof}
%

Corollary \ref{corollary:LMB-AA} has been given based on the constrained MPD-fit \cite{Gao20GLMB} and further based on the $L_2$ norm metric (disregarding the labels) instead of the KL divergence in \cite[Corollary 3]{Gostar20}. \revisNew{In both of them, the divergence between the MPDs ignores the difference between labels. }

\begin{remark}
A key condition for fusing $\delta$-GLMB as shown in Corollary \ref{corollary:delta-GLMB-AA} is that the history measurement-track association hypotheses $(L,\xi) \subseteq \mathbb{L} \times \Xi$ are well matched among sensors, in addition to the label matching. This involves a process of coordinating \revisNew{and revising perhaps the history estimates 
which} can not be satisfied in real time \cite[Ch.4.4.2]{Da21Recent}. This has been released to only requiring track-set (local hypothesis) matching and label matching in fusing the M-GLMB in Corollary \ref{corollary:M-GLMB-AA} and further to only label-matching in the LMB-AA fusion in Corollary \ref{corollary:LMB-AA}.
\end{remark}

\begin{table*}\footnotesize
	\centering
	\caption{Key variables and functions of different RFS filters for filtering and fusion} \label{tab:KeyEle_RFS}
 \vspace{-1mm}
	\begin{tabular*}{18cm}{ccccccccc}
\hline
		Filters: & Bernoulli	& PHD 	&	CPHD	&	MB	&	MBM	&	($\delta$-)GLBM	& LMB  & M-GLMB \\
\hline
		Items averaged: & $r, s(\mathbf{x})$ &	$D(\mathbf{x})$	& $\rho(n), D(\mathbf{x})$	&  $r_l, s_l(\mathbf{x})$ &	$w_{j,l},r_{j,l}, s_{j,l}(\mathbf{x})$	&	$\omega_i^{(L,\xi)}, s_i^{(\xi)}(\mathbf{x},l)$ 	&	$r_i^{(l)},{s_i}(\mathbf{x}, l)$ &	$\omega_i^{(L)}, s_i^{(L)}(\mathbf{x},l)$	\vspace{1mm} \\
\hline
		(L)PHD-fusion: & Corollary \ref{corollary:Bernoulli}	& Eq. \eqref{eq:PHD-AA-PPP} 	&	Lemma \ref{lemma:IIDC-AA} &	Eq. \eqref{eq:MB-PHA-AA}	& Eq. \eqref{eq:MBM-PHA-AA}	&	Lemma \ref{lemma:glmb-lphd-aa}/Corollary \ref{corollary:delta-GLMB-AA} 	& Corollary \ref{corollary:LMB-AA}  & Corollary \ref{corollary:M-GLMB-AA}   \\
\hline
	\end{tabular*}
\vspace{-3mm}
\end{table*}

\section{Conclusion} \label{sec:conclusion}
We derive suitable AA fusion formulas of different (labeled) RFS filters based on (labeled) PHD consistency, which theoretically explains how the AA fusion can deal with false/missing data and gain better accuracy in target detection and localization, and provides \revisNew{a theoretically unified and exact framework suited for all forms of either RFS or LRFS filters. 
Moreover, the limitation of existing MPD-best-fit derivations for labeled density fusion is pointed out and circumvented in our derivation. The exact AA fusion formula for the general GLMB filter is driven for the first time.} 
The key parameters and functions of different (labeled) RFS filters, which are individually averaged \revisNew{for PHD/LPHD-AA fusion}, are summarized in Table \ref{tab:KeyEle_RFS}.
While most results comply with existing formulas in the literature, our derivation is \revisNew{well-motivated, theoretically unified and exact.} 
\revisNew{It exposes the essence} of existing consensus-driven multi-sensor linear fusion RFS filters which merely seek consensus over the first order moments of the MPDs, namely the (labeled) PHD, rather than the complete posteriors. 
\arxivNew{Thanks to this, two computationally efficient, heterogenous PHD-MB-LMB filter fusion approaches via the coordinate decent method and variational approximation are given in \cite{Li23Heterogeneous} and \cite{Li23HeterVA}, respectively}. 

Open-access Matlab codes for some representative AA-fusion based multi-sensor RFS filters are available 
in the URL: sites.google.com/site/tianchengli85/matlab-codes/aa-fusion. \revisNew{They, as well as some other available PHD-AA implementations for homogeneous RFS/LRFS filter fusion, can be combined for heterogenous RFS filter cooperation
via averaging their respective unlabled/labeled PHDs. That being said, the implementation is not straightforward and may have to resort to sophisticated optimization technologies.} 
Compared with the unlabeled case, the labeled PHD fusion is more challenging due to the need of label matching which may not be able to be solved convincingly.
Scenario-specific label matching strategies need further investigation.

\bibliographystyle{IEEEtran}
\bibliography{RFS_AA}


\end{document}